\let\csname equation*\endcsname\relax
\let\csname endequation*\endcsname\relax
\newtheorem{Thm}{Theorem}
\newtheorem{Def}[Thm]{Definition}
\newtheorem{Prop}[Thm]{Proposition}
\newtheorem{Cor}[Thm]{Corollary}
\newtheorem{Lem}[Thm]{Lemma}
\newcommand{\bpp}{b.p.p}
\newcommand{\man}[1]{\ensuremath{\mathcal{#1}}}
\newcommand{\cur}[1]{\ensuremath{\mathcal{#1}}}
\newcommand{\AB}[1][M]{\ensuremath{\mathcal{B}(\man{#1})}}
\newcommand{\covers}{\ensuremath{\rhd}}
\newcommand{\Bound}[1]{\ensuremath{\textnormal{B}({#1})}}
\newcommand{\EmptySet}{\ensuremath{\emptyset}}
\newcommand{\Reg}[1]{\ensuremath{\textnormal{Reg}({#1})}}
\newcommand{\Irreg}[1]{\ensuremath{\textnormal{Irreg}({#1})}}
\newcommand{\Inf}[2]{\ensuremath{\textnormal{Inf}({#1},{#2})}}
\newcommand{\RemInf}[2]{\ensuremath{\textnormal{RemInf}({#1},{#2})}}
\newcommand{\EssInf}[2]{\ensuremath{\textnormal{EssInf}({#1},{#2})}}
\newcommand{\MixInf}[2]{\ensuremath{\textnormal{MixInf}({#1},{#2})}}
\newcommand{\PureInf}[2]{\ensuremath{\textnormal{PureInf}({#1},{#2})}}
\newcommand{\IRem}[1]{\ensuremath{\textnormal{Rem}_{\textnormal{Inf}}({#1})}}
\newcommand{\SRem}[2]{\ensuremath{\textnormal{Rem}_{\textnormal{Sing}}({#1},{#2})}}
\newcommand{\IEss}[1]{\ensuremath{\textnormal{Ess}_{\textnormal{Inf}}({#1})}}
\newcommand{\SEss}[2]{\ensuremath{\textnormal{Ess}_{\textnormal{Sing}}({#1},{#2})}}
\newcommand{\Mix}[1]{\ensuremath{\textnormal{Mix}({#1})}}
\newcommand{\Pure}[1]{\ensuremath{\textnormal{Pure}({#1})}}
\newcommand{\Sing}[2]{\ensuremath{\textnormal{Sing}({#1},{#2})}}
\newcommand{\RemSing}[2]{\ensuremath{\textnormal{RemSing}({#1},{#2})}}
\newcommand{\EssSing}[2]{\ensuremath{\textnormal{EssSing}({#1},{#2})}}
\newcommand{\MixSing}[2]{\ensuremath{\textnormal{MixSing}({#1},{#2})}}
\newcommand{\PureSing}[2]{\ensuremath{\textnormal{PureSing}({#1},{#2})}}
\newcommand{\NSing}[2]{\ensuremath{\textnormal{NonSing}({#1},{#2})}}
\newcommand{\App}[2]{\ensuremath{\textnormal{App}({#1},{#2})}}
\newcommand{\Unapp}[2]{\ensuremath{\textnormal{Nonapp}({#1},{#2})}}
\newcommand{\AppBound}[2]{\ensuremath{\textnormal{App}_{\textnormal{\scriptsize  Sing}}({#1},{#2})}}
\newcommand{\AppUnbound}[2]{\ensuremath{\textnormal{App}_{\textnormal{\scriptsize  Inf}}({#1},{#2})}}
\newcommand{\Env}[1][\man{M}]{\ensuremath{\Phi({#1})}}
\newcommand{\BPPSet}[1][\man{M}]{\ensuremath{\text{BPP}({#1})}}
\newcommand{\ABInf}[1]{\ensuremath{\textsc{Inf}({#1})}}
\newcommand{\ABMixInf}[1]{\ensuremath{\textsc{MixInf}({#1})}}
\newcommand{\ABPureInf}[1]{\ensuremath{\textsc{PureInf}({#1})}}
\newcommand{\ABSing}[1]{\ensuremath{\textsc{Sing}({#1})}}
\newcommand{\ABMixSing}[1]{\ensuremath{\textsc{MixSing}({#1})}}
\newcommand{\ABPureSing}[1]{\ensuremath{\textsc{PureSing}({#1})}}
\newcommand{\ABApp}[1]{\ensuremath{{\textsc{App}}({#1})}}
\newcommand{\ABUnapp}[1]{\ensuremath{{\textsc{Nonapp}}({#1})}}
\newcommand{\ABReg}[1]{\ensuremath{{\textsc{Indet}}({#1})}}
\begin{document}

\title{The dependence of the abstract boundary classification on a set of curves II: How the classification changes when the bounded parameter property satisfying set of curves changes}
\author{B E Whale\footnote{bwhale@maths.otago.ac.nz} \footnote{Department of Mathematics and Statistics, University of Otago}}

\maketitle

\begin{abstract}
	The abstract boundary uses sets of curves with the bounded parameter
  property (b.p.p.) to classify the elements of the abstract boundary
  into regular points, singular points, points at infinity and so on.
  Building on the material of Part one of this two part series,
  we show how this classification changes when the set of b.p.p.
  satisfying curves changes.
\end{abstract}

\section{Introduction}

  A boundary construction in General Relativity
  is a method to attach
  `ideal' points to a Lorentzian manifold. The constructions are 
  designed so that the ideal points can be classified 
  into physically
  motivated classes such as regular points, singular points, points
  at infinity and so on.

  To do this most boundary constructions use, implicitly or explicitly, 
  a set of curves, usually with a particular type of parametrization. 
  For example the $g$-boundary, \cite{Geroch1968a}, relies on incomplete 
  geodesics with affine parameter, 
  the $b$-boundary, \cite{Schmidt1971}, on incomplete curves with generalised 
  affine parameter and the $c$-boundary, \cite{GerochPenroseKronheimer1972},
  and its modern variants,
  \cite{Marolf2003New,citeulike:8211160,Flores2010Final}, on
  endless causal curves. 
  
  Papers such as \cite{ScottSzekeres1994,Geroch1868} 
  reiterate the point that careful consideration of the set of curves
  used in a classification of boundary points is 
  needed to get a correct definition
  of a singularity. Indeed, the issues with giving a consistent physical 
  interpretation, raised by the non-Hausdorff and non-$T_1$ separation
  properties of the $g$-, $b$- and older $c$-boundaries,
  is related to the set of curves used for classification `being too big', 
  e.g.\ including precompact timelike geodesics. For these boundaries, as the
  set of curves is also 
  connected to the construction of the boundary points, the
  inclusion of `too many curves' is part of the root cause of these
  separation properties, 
  \cite{Ashley2002a,Whale2010}. For example the non-Hausdorff behaviour
  of the $b$-boundary is directly related
  to the existence of inextendible incomplete curves that have more than
  one limit point, 
  \cite[Proposition 8.5.1]{HawkingEllis1973}.
  
  The abstract boundary, \cite{ScottSzekeres1994},
  differs from the boundaries mentioned above
  as its construction does not depend on a set of curves. It does use,
  however, a set of curves for the physical classification of its elements.
  This begs the question of what happens to the classification when the
  set of curves changes.
  This paper is the second in a series of two papers which answers this question.

	A set of curves must satisfy the bounded parameter property (b.p.p.) in order
	to be used for the classification of abstract boundary points.
	Unfortunately, a b.p.p.\ satisfying set may contain curves that do not contribute to the
	classification of abstract boundary points. As a consequence the standard algebra
	of sets, $\subset,\cup,\cap$, does not tell the full story regarding the relationships
	between b.p.p.\ satisfying sets, from the point of view of the abstract boundary.
	See Part I, \cite{Whale2012a}, for details and examples of this.
	Part I resolved this issue by generalising $\subset,\cup,\cap$. We use
	this generalisation to describe the relationships between the b.p.p.\ satisfying sets
	that we consider in this paper.
	
	The paper is divided into
	four sections. This section continues with a brief presentation of the
	classification of boundary and abstract boundary points. 
	Section \ref{altdef}
	presents an alternate definition of the classification of boundary points. This allows
	us to reduce our analysis from the study of the 15 sets of the classification
	to the study of 3 sets. This is a substantial simplification.
	Section \ref{sec.boundarychanges} shows how the boundary point classification
	changes when the set of curves changes, while Section \ref{sec.abstratboundarychanges}
	shows how the abstract boundary classification changes.
	
	\subsection{Preliminary Definitions}\label{sec.Background}
		We shall only consider manifolds, \man{M}, that are
		paracompact, Hausdorff, connected, $C^\infty$-manifolds 
		with a metric, $g$.
		Please refer to \cite{Whale2012a} for the definition of a curve and the bounded parameter property (b.p.p.).
		
		\begin{Def}[{\cite[Definition 9]{ScottSzekeres1994}}]
			An embedding, $\phi:\man{M}\to\man{M}_\phi$, of $\man{M}$ is an envelopment 
			if $\man{M}_\phi$ has 
			the same dimension as $\man{M}$.  Let $\Env$ be the set of all envelopments 
			of $\man{M}$.
		\end{Def}
		
		\begin{Def}[{\cite[Definition 4]{ScottSzekeres1994} and \cite[Definition 7]{Whale2012a}}]\label{BPPSet}
			Let $\BPPSet$ be the set of all sets of curves with the \bpp. That is
			$\BPPSet=\{\cur{C}: \cur{C}$ is a set of curves with the \bpp.$\}.$
		\end{Def}
		
		\begin{Def}[{\cite[Definition 14 and 22, Theorem 18]{ScottSzekeres1994}}]
		  Let $\Bound{\man{M}}$ be the set of all ordered pairs $(\phi,U)$
		  of envelopments $\phi$ and subsets $U$ of 
		  $\partial\phi(\man{M})=\overline{\phi(\man{M})}-\phi(\man{M})$. That is,
		  \[
		    \Bound{\man{M}}=\{(\phi,U): \phi\in\Env,\, U\subset\partial\phi(\man{M})\}.
		  \]
		  An element $(\phi,U)$ of $\Bound{\man{M}}$, or just 
		  $U\subset\partial\phi(\man{M})$, is called a boundary set.
		  If $U=\{p\}$ then $(\phi,\{p\})$, or just $p\in\partial\phi(\man{M})$, is called a boundary point.
		  
		  Define a partial order $\covers$ on 
		  $\Bound{\man{M}}$ by $(\phi, U)\covers(\psi, V)$ if and only 
		  if for every sequence $\{x_i\}$ in \man{M}, $\{\psi(x_i)\}$ has a 
		  limit point in $V$ implies that $\{\phi(x_i)\}$ has a limit point in $U$.  
		  We can construct an equivalence relation $\equiv$ on $\Bound{\man{M}}$ by 
		  $(\phi, U)\equiv(\psi, V)$ if and only if $(\phi, U)\covers(\psi, V)$ and
		  $(\psi, V)\covers(\phi, U)$. We denote the equivalence class of $(\phi,U)$ by 
		  $[(\phi, U)]$.
		  
		  The abstract boundary is the set
		  \[
		    \AB=\left\{[(\phi,U)]\in\frac{\Bound{\man{M}}}{\equiv}:\exists 
		    (\psi,\{p\})\in[(\phi,U)] \right\}.
		  \]
		  It is the set of all equivalence classes of $\Bound{\man{M}}$ under the 
		  equivalence relation $\equiv$ that contain an element $(\psi,\{p\})$ where 
		  $p\in\partial\psi(\man{M})$.   The elements of the abstract boundary are referred to as abstract boundary
		  points.
		\end{Def}
		
		The standard algebra of sets, $\subset,\cup,\cap$, does not 
		respect the b.p.p., \cite[Section 2.1]{Whale2012a}. The first part of this series, \cite{Whale2012a},
		addressed this problem by defining a generalisation, $\subset_{\textnormal{b.p.p.}},\cup_{\textnormal{b.p.p.}},\cap_{\textnormal{b.p.p.}}$, of the standard algebra
		of sets over $\BPPSet$ that also behaves well with respect 
		to the classification.
		The details of this generalisation necessary for this paper
		are presented at the beginning of Section \ref{sec.boundarychanges}.
		
		We now present the classification of
		boundary points. In
		\cite{ScottSzekeres1994} the definitions below include references to the 
		differentiability of the metrics involved.  While the differentiability
		of regular points, singular points and points at infinity is an important 
		part of the subject we shall not need this here.  It is an easy matter
		to extend the definitions below and the results of the following sections to 
		include references to the differentiability of the boundary 
		points considered.
		
		\begin{Def}[Regular Boundary Point, 
		  {\cite[Definition 28]{ScottSzekeres1994}}]\label{def.regularpoints}
			A boundary point $p\in\partial\phi(\man{M})$, $\phi\in\Env$, 
			is said to be regular if there exists $\psi\in\Env$ such that
			\begin{enumerate}
				\item $\phi(\man{M})\cup\{p\}\subset\man{M}_\psi$ and $\man M_\psi$ is
					a regular submanifold of $\man{M}_\phi$,
				\item $\psi(x)=\phi(x)$, for all $x\in\man M$, and
				\item there exists a metric $\hat{g}$ on $\man{M}_\psi$ so that 
				$\hat{g}|_{\psi(\man{M})}=g$.
			\end{enumerate}
			
			We make the following definitions;
			\begin{align*}
				\Reg{\phi}&=\{p\in\partial\phi(\man{M}): p\text{ is a }
				  \text{regular point}\}\\
				\Irreg{\phi}&=\{p\in\partial\phi(\man{M}): p
				  \text{ is not a regular point}\}\\
							&=\partial\phi(\man{M})-\Reg{\phi}.
			\end{align*}
		\end{Def}
				
		\begin{Def}[Approachable and Unapproachable Points, 
		  {\cite[Definition 24]{ScottSzekeres1994}}]
			Let $\phi\in\Env$ and $\cur{C}$ be a set of curves with the b.p.p. 
			A boundary point $p\in\partial\phi(\man{M})$ is approachable if 
			there exists $\gamma\in\cur{C}$ so that $p$ is a limit point of the 
			image of the curve $\phi\circ\gamma$.
			
			We make the following definitions;
			\begin{align*}
				\App{\phi}{\cur{C}}&=\{p\in\partial\phi(\man{M}):p
				  \text{ is approachable}\}\\
				\Unapp{\phi}{\cur{C}}&=\{p\in\partial\phi(\man{M}):p
				  \text{ is unapproachable}\}\\
							&=\partial\phi(\man{M})-\App{\phi}{\cur{C}}.
			\end{align*}
		\end{Def}
				
		\begin{Def}[Point at Infinity, 
		  {\cite[Definition 31, 34 and 36]{ScottSzekeres1994}}]\label{def.infpoints}
			Let $\phi\in\Env$ and $\cur{C}$ be a set of curves with the b.p.p.
			A boundary point $p\in\partial\phi(\man{M})$ is said to be a 
			point at infinity if
			\begin{enumerate}
				\item $p\not\in\Reg{\phi}$,
				\item $p\in\App{\phi}{\cur{C}}$,
				\item For all $\gamma\in\cur{C}$, if $p$ is a limit point of 
				$\phi\circ\gamma$ then $\gamma$ is unbounded.
			\end{enumerate}
			
			We make the following definitions;
			\begin{align*}
				\Inf{\phi}{\cur{C}}&=\{p\in\partial\phi(\man{M}): p
				  \text{ is a point at infinity}\}\\
				\RemInf{\phi}{\cur{C}}&=\{p\in \Inf{\phi}{\cur{C}}:\exists(\psi,U)
				  \in \Bound{M}\text{ with}\\ 
					&\quad\quad\quad\quad  U\subset\Reg{\psi}
					\text{ so that}\ (\psi,U)\covers(\phi,\{p\})\}\\
				\EssInf{\phi}{\cur{C}}&=\Inf{\phi}{\cur{C}}-\RemInf{\phi}{\cur{C}}\\	
				\MixInf{\phi}{\cur{C}}&=\{p\in\EssInf{\phi}{\cur{C}}:
				  \exists(\psi,\{q\})\in \Bound{M}\text{ with}\\ 
					&\quad\quad\quad\quad  q\in\Reg{\psi}
					\text{ so that}\ (\phi,\{p\})\covers(\psi,\{q\})\}\\
				\PureInf{\phi}{\cur{C}}&=\EssInf{\phi}{\cur{C}}-\MixInf{\phi}{\cur{C}}
			\end{align*}
			Elements of $\RemInf{\phi}{\cur{C}}$
			are referred to as removable points at
			infinity, $\EssInf{\phi}{\cur{C}}$ as essential points at
			infinity, $\MixInf{\phi}{\cur{C}}$ as mixed points at infinity and 
			$\PureInf{\phi}{\cur{C}}$ as pure points at infinity.
		\end{Def}
		
		\begin{Def}[Singular Point, 
		  {\cite[Definition 37, 40, 41, 42 and 44]{ScottSzekeres1994}}]\label{def.singpoints}
			Let $\phi\in\Env$ and $\cur{C}$ be a set of curves with the \bpp.
			A boundary point $p\in\partial\phi(\man{M})$ is said to be a singularity if
			\begin{enumerate}
				\item $p\not\in\Reg{\phi}$,
				\item $p\in\App{\phi}{\cur{C}}$,
				\item There exists $\gamma\in\cur{C}$ so that $p$ is a limit point of 
				$\phi\circ\gamma$ and $\gamma$ is bounded.
			\end{enumerate}
			
			We can make the following definitions;
			\begin{align*}
				\Sing{\phi}{\cur{C}}&=\{p\in\partial\phi(\man{M}): p
				  \text{ is a singularity}\}\\
				\NSing{\phi}{\cur{C}}&=\partial\phi(\man{M})-\Sing{\phi}{\cur{C}}\\
				\RemSing{\phi}{\cur{C}}&=\{p\in\Sing{\phi}{\cur{C}}:
				  \exists(\psi,U)\in \Bound{M}\text{ with}\\ 
					&\quad\quad\quad\quad  U\subset\NSing{\psi}{\cur{C}}
					\text{ so that}\ (\psi,U)\covers(\phi,\{p\})\}\\
				\EssSing{\phi}{\cur{C}}&=\Sing{\phi}{\cur{C}}-\RemSing{\phi}{\cur{C}}\\
				\MixSing{\phi}{\cur{C}}&=\{p\in\EssSing{\phi}{\cur{C}}:
				  \exists(\psi,\{q\})\in \Bound{M}\text{ with}\\ 
					&\quad\quad\quad\quad  q\in\Reg{\psi}
					\text{ so that}\ (\phi,\{p\})\covers(\psi,\{q\})\}\\
				\PureSing{\phi}{\cur{C}}&=\EssSing{\phi}{\cur{C}}-\MixSing{\phi}{\cur{C}}
			\end{align*}		
			Elements of $\RemSing{\phi}{\cur{C}}$ are called removable singularities, 
			$\EssSing{\phi}{\cur{C}}$ are
			called essential singularities, $\MixSing{\phi}{\cur{C}}$ are called 
			mixed (or directional) singularities, $\PureSing{\phi}{\cur{C}}$ are
			called pure singularities.
		\end{Def}
		
		In \cite{ScottSzekeres1994} the properties of the above definitions 
		are explored with respect to
		the equivalence relation $\equiv$.  Scott and Szekeres
		show that the following definitions are well defined.  
		
		\begin{Def}[Approachable and unapproachable abstract boundary points, 
		  {\cite[Section 5]{ScottSzekeres1994}}]\label{App&UnAppABpoints}
			Let $\cur{C}$ be a set of curves with the b.p.p.\ then we can define
			\begin{align*}
				\ABApp{\cur{C}}&=\{[(\phi,\{p\})]\in\AB: p\in\App{\phi}{\cur{C}}\}\\
				\ABUnapp{\cur{C}}&=\{[(\phi,\{p\})]\in\AB:
						p\in\Unapp{\phi}{\cur{C}}\}
			\end{align*}
			Elements of $\ABApp{\cur{C}}$ are called approachable abstract boundary points 
			and elements of
			$\ABUnapp{\cur{C}}$ are called unapproachable abstract boundary points.
		\end{Def}
				
		\begin{Def}[Indeterminate abstract boundary points, 
		  {\cite[Section 5]{ScottSzekeres1994}}]\label{IndABPOints}
			Let $\cur{C}$ be a set of curves with the b.p.p., then an abstract boundary 
			point $[(\phi,\{p\})]\in\AB$ 
			is an indeterminate abstract boundary point if one of the following is true,
			\begin{enumerate}
				\item $p\in\Reg{\phi}$,
				\item $p\in\RemInf{\phi}{\cur{C}}$, or
				\item $p\in\RemSing{\phi}{\cur{C}}$.
			\end{enumerate}
			Let,
			\begin{multline*}
				\ABReg{\cur{C}}=\{[(\phi,\{p\})]\in\AB: p\text{ is an }
						\text{indeterminate abstract boundary point}\}
			\end{multline*}
		\end{Def}
		
		\begin{Def}[Abstract boundary points at infinity, 
		  {\cite[Definition 47 and the paragraph after Definition 48]{ScottSzekeres1994}}]\label{InfAbPoints}
			Let $\cur{C}$ be a set of curves with the b.p.p., then we can 
			make the following definitions
			\begin{align*}
				\ABInf{\cur{C}}&=\{[(\phi,\{p\})]\in\AB: 
				  p\in\EssInf{\phi}{\cur{C}}\}\\
				\ABMixInf{\cur{C}}&=\{[(\phi,\{p\})]\in\AB: 
				  p\in\MixInf{\phi}{\cur{C}}\}\\
				\ABPureInf{\cur{C}}&=\{[(\phi,\{p\})]\in\AB: 
				  p\in\PureInf{\phi}{\cur{C}}\}.
			\end{align*}
			Elements of $\ABInf{\cur{C}}$ are called abstract boundary points at
			infinity, elements of $\ABMixInf{\cur{C}}$ are called abstract boundary
			mixed points at infinity and elements of $\ABPureInf{\cur{C}}$
			are called abstract boundary pure points at infinity.
		\end{Def}
		
		\begin{Def}[Singular abstract boundary points, 
		  {\cite[Definition 48 and the following paragraph]{ScottSzekeres1994}}]\label{SingAbPoints}
			Let $\cur{C}$ be a set of curves with the \bpp, then we can make the following definitions
			\begin{align*}
				\ABSing{\cur{C}}&=\{[(\phi,\{p\})]\in\AB: p\in\EssSing{\phi}{\cur{C}}\}\\
				\ABMixSing{\cur{C}}&=\{[(\phi,\{p\})]\in\AB: p\in\MixSing{\phi}{\cur{C}}\}\\
				\ABPureSing{\cur{C}}&=\{[(\phi,\{p\})]\in\AB: p\in\PureSing{\phi}{\cur{C}}\}.
			\end{align*}
			Elements of $\ABSing{\cur{C}}$ are called abstract boundary singular
			points, elements of $\ABMixSing{\cur{C}}$ are called abstract boundary
			mixed (or directional) singularities and elements of 
			$\ABPureSing{\cur{C}}$ are called abstract boundary pure
			singularities.
		\end{Def}

\section{Alternate definitions of the classes of boundary points}\label{altdef}

  Before we study how the classification given above changes with respect to 
  changes in the b.p.p.\ satisfying set of curves, we revisit the 
  definitions of the classes.
  We generalise a few of the concepts behind the classification 
  presented in Section \ref{sec.Background}
  and express each of the classes of the classification
  as a union / intersection of more `primitive' sets.  
  This will allow us to reduce the study of the 15 sets of the boundary point 
  classification 
  to the study of 3 sets.

  In \cite[Definition 6]{Whale2012a} the following subdivision of
  $\App{\phi}{\cur{C}}$, $\phi\in\Env$, $\cur{C}\in\BPPSet$ was introduced.
  \begin{Def}[{\cite[Definition 6]{Whale2012a}}]
		Let $\AppBound{\phi}{\cur{C}}$ and $\AppUnbound{\phi}{\cur{C}}$ be defined by, 
		\begin{multline*}
			\AppBound{\phi}{\cur{C}}=\{p\in\App{\phi}{\cur{C}}:\ \text{there exists}\ \gamma\in\cur{C}\text{, bounded,}\\ \text{so that}\ 
			p\ \text{is a limit point of}\ \phi\circ\gamma\}
		\end{multline*}
		\begin{multline*}
			\AppUnbound{\phi}{\cur{C}}=\{p\in\App{\phi}{\cur{C}}:\ \text{for all}\ \gamma\in\cur{C},
			\text{if}\ p\ \text{is a}\\  \text{limit point of}\ 
			\phi\circ\gamma\ \text{then}\ \gamma\ \text{has unbounded 
			parameter}\}.
		\end{multline*}
  \end{Def}
  This generalises the idea of `singular point' and `point at infinity'.
  We can generalise the concepts of `mixed' and `pure' points.
  \begin{Def}\label{def.mixedpurepoints}
		Let $\phi\in\Env$. A boundary point $p\in\partial\phi(\man{M})$ is mixed if there 
		exists $(\psi,\{q\})\in\Bound{\man{M}}$ such that
		\begin{enumerate}
						\item $(\phi,\{p\})\covers(\psi,\{q\})$,
						\item $q\in\Reg{\psi}$.
		\end{enumerate}
		If $p$ is not mixed then we shall say that it is a pure boundary point.
		
		We can make the following definitions;
		\begin{align*}
						\Mix{\phi}&=\{p\in\partial\phi(\man{M}): p\text{ is a mixed boundary point}\}\\
						\Pure{\phi}&=\partial\phi(\man{M})-\Mix{\phi}.
		\end{align*}
  \end{Def}
  
	As will become clear, the analysis of how the classification changes would be much easier if 
	we could also define a `removable' point independently of points at
	infinity and singular points. Unfortunately the
  small difference in the definition of $\RemInf{\phi}{\cur{C}}$ and 
  $\RemSing{\phi}{\cur{C}}$ is a serious (and probably fatal) impediment to 
  this.  If
  $p\in\RemInf{\phi}{\cur{C}}$ then 
  there must exist $(\psi,U)\in\Bound{\man{M}}$ so that $U\subset\Reg{\psi}$  
  and 
  $(\psi,U)\covers (\phi,\{p\})$. If
  $p\in\RemSing{\phi}{\cur{C}}$ then
  there must exist $(\psi,U)\in\Bound{\man{M}}$ so that
  $U\subset\NSing{\phi}{\cur{C}}$ and $(\psi,U)\covers (\phi,\{p\})$.
  Thus the definition of a removable point at infinity depends on $\Reg{\phi}$,
  while the definition of a removable singularity depends on 
  $\NSing{\phi}{\cur{C}}$.  
  Hence, while the two definitions are similar, to provide a single
  definition of a removable point we would need to show something like
  $p\in\RemSing{\phi}{\cur{C}}$ if and only if 
  there exists 
  $(\psi,U)\in\Bound{\man{M}}$ so that $U\subset\Reg{\phi}$ and 
  $(\psi,U)\covers (\phi,\{p\})$.
  Such a result is,  almost certainly, false due to the existence of
  non-approachable boundary points, see the proof of
  \cite[Theorem 43]{ScottSzekeres1994}.
  
  We avoid this issue by differentiating 
  between the two `types' of removable point.
  \begin{Def}\label{def.reminfpoints}
		Let $\IRem{\phi}$ be defined as
		\begin{multline*}
			\IRem{\phi}=\{p\in\partial\phi(\man{M}):\exists (\psi,U)\in\Bound{\man{M}}\text{ so that }\\ U\subset\Reg{\phi}\text{ and }(\psi,U)\covers(\phi,\{p\})\}.  
		\end{multline*}
		This is the set of all points that are removable in the sense of the definition of a removable point at infinity.
		Let $\IEss{\phi}=\partial\phi(\man{M})-\IRem{\phi}$.
  \end{Def}
          
  \begin{Def}\label{def.srem-sess}
		Let $\SRem{\phi}{\cur{C}}$ be defined as,
		\begin{multline*}
			\SRem{\phi}{\cur{C}}=\{p\in\partial\phi(\man{M}):\exists (\psi,U)\in\Bound{\man{M}}\\
			\text{ so that }U\subset\NSing{\phi}{\cur{C}}\text{ and }(\psi,U)\covers(\phi,\{p\})\}
		\end{multline*}
		This is the set of all points that are removable in the sense of the definition of a removable singularity.
		Let $\SEss{\phi}{\cur{C}}=\partial\phi(\man{M})-\SRem{\phi}{\cur{C}}.$
  \end{Def}
  Since $\Reg{\phi}\subset \NSing{\phi}{\cur{C}}$ we see that
  $\IRem{\phi}\subset\SRem{\phi}{\cur{C}}$, for all $\phi\in\Env$ and
  $\cur{C}\in\BPPSet$.
   
  These definitions will eventually lead to some interesting results.  
  In particular, when expanding a b.p.p.\ satisfying set to include 
  more curves
  it is possible for a pure point at infinity to become a removable singularity.
 
  The classes defined in Section \ref{sec.Background}
  can be expressed in terms of the sets we have just given.
  \begin{Prop}\label{Prop.Altdefinition}
		Let $\phi\in\Env$ and $\cur{C}$ be a set of curves with the \bpp. 
		Then we have that,
		\begin{align*}
			\Inf{\phi}{\cur{C}}&=\Irreg{\phi}\cap\AppUnbound{\phi}{\cur{C}}\\
			\RemInf{\phi}{\cur{C}}&=\Irreg{\phi}\cap\AppUnbound{\phi}{\cur{C}}\cap\IRem{\phi}\\
			\EssInf{\phi}{\cur{C}}&=\Irreg{\phi}\cap\AppUnbound{\phi}{\cur{C}}\cap\IEss{\phi}\\	
			\MixInf{\phi}{\cur{C}}&=
							\Irreg{\phi}\cap\AppUnbound{\phi}{\cur{C}}\cap\IEss{\phi}\cap\Mix{\phi}\\
			\PureInf{\phi}{\cur{C}}&=
			\Irreg{\phi}\cap\AppUnbound{\phi}{\cur{C}}\cap\IEss{\phi}\cap\Pure{\phi}\\ \\
			\Sing{\phi}{\cur{C}}&=\Irreg{\phi}\cap\AppBound{\phi}{\cur{C}}\\
			\NSing{\phi}{\cur{C}}&=\Reg{\phi}\cup\AppUnbound{\phi}{\cur{C}}\cup\Unapp{\phi}{\cur{C}}\\
			\RemSing{\phi}{\cur{C}}&=\Irreg{\phi}\cap\AppBound{\phi}{\cur{C}}\cap\SRem{\phi}{\cur{C}}\\
			\EssSing{\phi}{\cur{C}}&=\Irreg{\phi}\cap\AppBound{\phi}{\cur{C}}\cap\SEss{\phi}{\cur{C}}\\
			\MixSing{\phi}{\cur{C}}&=
			\Irreg{\phi}\cap\AppBound{\phi}{\cur{C}}\cap\SEss{\phi}{\cur{C}}\cap\Mix{\phi}\\
			\PureSing{\phi}{\cur{C}}&=
			\Irreg{\phi}\cap\AppBound{\phi}{\cur{C}}\cap\SEss{\phi}{\cur{C}}\cap\Pure{\phi}.		
		\end{align*}
  \end{Prop}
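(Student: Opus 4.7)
The proof is essentially a definition-chasing exercise, so my plan is to isolate the one fact that does genuine work and then note how it unwinds each of the eleven identities. That fact is the partition
\[
  \App{\phi}{\cur{C}} = \AppBound{\phi}{\cur{C}} \sqcup \AppUnbound{\phi}{\cur{C}}.
\]
Disjointness is immediate: a bounded $\gamma\in\cur{C}$ with $p$ as a limit point directly contradicts the universal clause defining $\AppUnbound{\phi}{\cur{C}}$. To see that the union is all of $\App{\phi}{\cur{C}}$, I fix an approachable $p$ and split on whether or not some bounded curve in $\cur{C}$ has $p$ as a limit point; the `yes' case places $p$ in $\AppBound{\phi}{\cur{C}}$ and the `no' case places it in $\AppUnbound{\phi}{\cur{C}}$.

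With this partition in hand, the identities for $\Inf{\phi}{\cur{C}}$ and $\Sing{\phi}{\cur{C}}$ are just rewritings of Definitions \ref{def.infpoints} and \ref{def.singpoints}: clause~(1) in each definition contributes the factor $\Irreg{\phi}$, while clauses~(2) and~(3) combined are exactly the membership condition of $\AppUnbound{\phi}{\cur{C}}$ (for points at infinity) or $\AppBound{\phi}{\cur{C}}$ (for singularities). The removable subclasses follow next: the additional existential clause in the definition of $\RemInf{\phi}{\cur{C}}$ (resp.\ $\RemSing{\phi}{\cur{C}}$) is literally the membership condition of $\IRem{\phi}$ (resp.\ $\SRem{\phi}{\cur{C}}$) from Definitions \ref{def.reminfpoints} and \ref{def.srem-sess}, so intersecting gives the desired formula. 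The essential subclasses then appear by set-theoretic complementation within $\partial\phi(\man{M})$, yielding the factor $\IEss{\phi}$ or $\SEss{\phi}{\cur{C}}$. The mixed and pure subclasses follow identically from Definition~\ref{def.mixedpurepoints}, producing the final factors $\Mix{\phi}$ or $\Pure{\phi}$.

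The identity for $\NSing{\phi}{\cur{C}}$ is obtained by De~Morgan from the formula for $\Sing{\phi}{\cur{C}}$ together with the partition $\partial\phi(\man{M}) = \Reg{\phi}\sqcup\Irreg{\phi}$ and the three-piece partition $\partial\phi(\man{M}) = \Unapp{\phi}{\cur{C}} \sqcup \AppBound{\phi}{\cur{C}} \sqcup \AppUnbound{\phi}{\cur{C}}$ (the latter being immediate from the first paragraph). No step should present a real obstacle; the only place where any pause is required is in verifying the disjoint decomposition of $\App{\phi}{\cur{C}}$, since it is precisely this decomposition that lets the multi-clause definitions of $\Inf$ and $\Sing$ collapse to a single intersection and that makes the complementation yielding $\NSing$ line up with the three-term union on the right-hand side.
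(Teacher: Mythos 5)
Your proof is correct and follows essentially the same route as the paper's, which simply observes that all the identities follow immediately from the definitions except that for $\NSing{\phi}{\cur{C}}$, where the set identity $A-(B\cap C)=(A-B)\cup(A-C)$ is invoked --- your De Morgan step via the partitions $\partial\phi(\man{M})=\Reg{\phi}\cup\Irreg{\phi}$ and $\partial\phi(\man{M})=\Unapp{\phi}{\cur{C}}\cup\AppBound{\phi}{\cur{C}}\cup\AppUnbound{\phi}{\cur{C}}$ is the same computation. Your explicit isolation of the disjoint decomposition $\App{\phi}{\cur{C}}=\AppBound{\phi}{\cur{C}}\cup\AppUnbound{\phi}{\cur{C}}$ is a helpful expository addition but not a different argument.
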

  \begin{proof}
		The proofs of these statements follow immediately from the
		definitions, except for the equation 
		$\NSing{\phi}{\cur{C}}=\Reg{\phi}\cup\AppUnbound{\phi}{\cur{C}}\cup\Unapp{\phi}{\cur{C}}$
		which requires use of the equation $A-(B\cap C)=(A- B)\cup(A- C)$.
  \end{proof}
  
  As $\AppUnbound{\phi}{\cur{C}}=\App{\phi}{\cur{C}}-\AppBound{\phi}{\cur{C}}$
  and $\SEss{\phi}{\cur{C}}=\partial\phi(\man{M})-\SRem{\phi}{\cur{C}}$,
  Proposition \ref{Prop.Altdefinition} implies that we need only study
  how the three sets $\App{\phi}{\cur{C}}$, $\AppBound{\phi}{\cur{C}}$ and
  $\SRem{\phi}{\cur{C}}$ change under changes of $\cur{C}$
  in order to work out how each of the 15 sets of the boundary point classification change.

\section{How the boundary classification changes as we change \ensuremath{\cur{C}}}\label{sec.boundarychanges}

	It was claimed in the introduction, Section \ref{sec.Background}, that the algebra of sets $\subset_{\textnormal{b.p.p.}}$, $\cup_{\textnormal{b.p.p.}}$, $\cap_{\textnormal{b.p.p.}}$,
	defined on $\BPPSet$, behaves nicely with respect to the classification.
	By this we mean the following \cite[Section 4]{Whale2012a}, if $\cur{C},\cur{D}\in\BPPSet$ and $\phi\in\Env$
	then
	\begin{align*}
	  \cur{C}\subset_{\textnormal{b.p.p.}}\cur{D} \Rightarrow&\left\{\begin{aligned}
	                                               \App{\phi}{\cur{C}}&\subset\App{\phi}{\cur{D}}\\
	                                               \AppBound{\phi}{\cur{C}}&\subset\AppBound{\phi}{\cur{D}}
	                                             \end{aligned}\right.\\[7pt]
		\App{\phi}{\cur{C}},\, \App{\phi}{\cur{D}}&\subset\App{\phi}{\cur{C}\cup_{\textnormal{b.p.p.}}\cur{D}}\\
		\AppBound{\phi}{\cur{C}},\, \AppBound{\phi}{\cur{D}}&\subset\AppBound{\phi}{\cur{C}\cup_{\textnormal{b.p.p.}}\cur{D}}\\[7pt]
		\App{\phi}{\cur{C}\cap_{\textnormal{b.p.p.}}\cur{D}}&\subset \App{\phi}{\cur{C}},\, \App{\phi}{\cur{D}}\\
		\AppBound{\phi}{\cur{C}\cap_{\textnormal{b.p.p.}}\cur{D}}&\subset \AppBound{\phi}{\cur{C}},\, \AppBound{\phi}{\cur{D}}.
	\end{align*}
	Note that $\cur{C}\cup_{\textnormal{b.p.p.}}\cur{D}$ does not always exist due to the 
	possible existence of curves in $\cur{C}$ and
	$\cur{D}$ that have the same image where one is bounded and the other unbounded. See \cite[Section 2.1]{Whale2012a} for details of this.
	The equations, given above, involving $\cur{C}\cup_{\textnormal{b.p.p.}}\cur{D}$ are only valid when $\cur{C}\cup_{\textnormal{b.p.p.}}\cur{D}$ 
	exists.

	Below it is shown how the boundary point classification differs between;
	\begin{itemize}
	  \item $\cur{C}$ and $\cur{D}$ when $\cur{C}\subset_{\textnormal{b.p.p.}}\cur{D}$
	  \item $\cur{C},\cur{D}$ and $\cur{C}\cup_{\textnormal{b.p.p.}}\cur{D}$, when $\cur{C}\cup_{\textnormal{b.p.p.}}\cur{D}$
			can be defined.
	\end{itemize}
	We do not show how the classification differs between $\cur{C},\cur{D}$
	and $\cur{C}\cap_{\textnormal{b.p.p.}}\cur{D}$ as it is
	very unlikely that one would want to restrict the
	set of curves used for analysis of the boundary. Moreover the details
	for this case can be determined by following the pattern of results
	established by the cases $\cur{C}\subset_{\textnormal{b.p.p.}}\cur{D}$
	and $\cur{C}\cup_{\textnormal{b.p.p.}}\cur{D}$.
		
	\subsection{Subset}\label{subsetsClass}
	
		Given $\cur{C}\in\BPPSet$ we investigate how the 
		boundary classification induced by $\cur{C}$
		relates to the boundary classification induced by $\cur{D}\in\BPPSet$
		when $\cur{C}\subset_{b.p.p}\cur{D}$. 
		Because of the results of Section 
		\ref{altdef}, we first focus on the sets $\App{\phi}{\cur{C}}$,
		$\AppBound{\phi}{\cur{C}}$
		and $\SRem{\phi}{\cur{C}}$.
		
		\begin{Prop}\label{prop.InfChanges}
			Let $\cur{C},\cur{D}\in\BPPSet$ so that 
			$\cur{C}\subset_{\textnormal{b.p.p.}}\cur{D}$ then for all $\phi\in\Env$,
			\begin{gather*}
				\App{\phi}{\cur{C}}\subset\App{\phi}{\cur{D}},\\
				\AppBound{\phi}{\cur{C}}\subset\AppBound{\phi}{\cur{D}},\\
				\SRem{\phi}{\cur{D}}\subset\SRem{\phi}{\cur{C}}.
			\end{gather*}
		\end{Prop}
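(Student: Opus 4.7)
The first two inclusions are not really new work: they are the $\App$ and $\AppBound$ statements already recalled at the start of Section \ref{sec.boundarychanges} (imported from \cite[Section 4]{Whale2012a}). My plan is simply to cite them and move on.

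For the third inclusion, the plan is to reduce $\SRem{\phi}{\cur{D}}\subset\SRem{\phi}{\cur{C}}$ to a containment between the $\NSing$ sets and then unwind the definition. Concretely, for any envelopment $\psi\in\Env$, the formula from Proposition \ref{Prop.Altdefinition},
\begin{equation*}
  \NSing{\psi}{\cur{C}}=\Reg{\psi}\cup\AppUnbound{\psi}{\cur{C}}\cup\Unapp{\psi}{\cur{C}},
\end{equation*}
together with $\Sing{\psi}{\cur{C}}=\Irreg{\psi}\cap\AppBound{\psi}{\cur{C}}$, shows that $\NSing{\psi}{\cur{C}}=\partial\psi(\man{M})-(\Irreg{\psi}\cap\AppBound{\psi}{\cur{C}})$. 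Since $\AppBound{\psi}{\cur{C}}\subset\AppBound{\psi}{\cur{D}}$ by the already-established second inclusion, taking complements gives $\NSing{\psi}{\cur{D}}\subset\NSing{\psi}{\cur{C}}$.

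Now given $p\in\SRem{\phi}{\cur{D}}$, the definition supplies $(\psi,U)\in\Bound{\man{M}}$ with $U\subset\NSing{\psi}{\cur{D}}$ and $(\psi,U)\covers(\phi,\{p\})$. The containment $\NSing{\psi}{\cur{D}}\subset\NSing{\psi}{\cur{C}}$ just established then witnesses $U\subset\NSing{\psi}{\cur{C}}$, so the same pair $(\psi,U)$ certifies $p\in\SRem{\phi}{\cur{C}}$.

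There is no real obstacle here: the only subtle point is recognising that $\AppBound$ changes \emph{covariantly} with $\cur{C}$ while $\NSing$, being essentially the complement of $\AppBound$ intersected with $\Irreg{\psi}$, changes \emph{contravariantly}, which in turn flips the direction of the inclusion for $\SRem$. Once that observation is made, the argument is a one-line unwinding of definitions.
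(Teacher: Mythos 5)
Your proposal is correct and follows essentially the same route as the paper: the first two inclusions are cited from Part I, and the third is obtained by first showing $\NSing{\psi}{\cur{D}}\subset\NSing{\psi}{\cur{C}}$ (the paper does this via $A-(B\cap C)=(A-B)\cup(A-C)$ applied to Definition \ref{def.singpoints}, which is the same complement-taking observation you make) and then transporting the witness $(\psi,U)$ through that containment. No gaps.
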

		\begin{proof}

			The first two statements follow from Proposition 30 of 
			\cite{Whale2012a}.
			
			From the second statement we know that
			$\partial\phi(\man{M})-\AppBound{\phi}{\cur{D}}\subset\partial\phi(\man{M})-\AppBound{\phi}{\cur{C}}$.
			Thus, from Definition \ref{def.singpoints} and as $A-(B\cap C)=(A-B)\cup(A-C)$, we have that, for all $\phi\in\Env$,
			\begin{align*}
				\NSing{\phi}{\cur{D}}&=\left(\partial\phi(\man{M})-\Irreg{\phi}\right)\cup
						\left(\partial\phi(\man{M})-\AppBound{\phi}{\cur{D}}\right)\\
					&\subset\left(\partial\phi(\man{M})-\Irreg{\phi}\right)\cup
						\left(\partial\phi(\man{M})-\AppBound{\phi}{\cur{C}}\right)\\
					&=\NSing{\phi}{\cur{C}}.
			\end{align*}
			
			Let $p\in\SRem{\phi}{\cur{D}}$ then there exists $(\psi,U)\in\Bound{\man M}$
			so that $U\subset\NSing{\psi}{\cur{D}}$ and $(\psi,U)\covers(\phi,\{p\})$.
			From above we know that $U\subset\NSing{\psi}{\cur{C}}$
			and therefore $p\in\SRem{\phi}{\cur{C}}$, as required.
		\end{proof}

		Using Proposition \ref{prop.InfChanges} we can determine
		how the other sets with a dependence on $\cur{C}$ appearing in the
		left hand side of the equations of Proposition \ref{Prop.Altdefinition}
		are affected.

		\begin{Cor}\label{cor.InfChanges}
			Let $\cur{C},\cur{D}\in\BPPSet$ so that $\cur{C}\subset_{\textnormal{b.p.p.}}\cur{D}$
			then for all $\phi\in\Env$,
			\begin{enumerate}
				\item $\App{\phi}{\cur{D}}-\App{\phi}{\cur{C}}=\App{\phi}{\cur{D}}\cap\Unapp{\phi}{\cur{C}}$,\label{cor.infchanges.1}
				\item  $\AppBound{\phi}{\cur{D}}-\AppBound{\phi}{\cur{C}}=\AppBound{\phi}{\cur{D}}\cap
							\bigl(\Unapp{\phi}{\cur{C}}\cup\AppUnbound{\phi}{\cur{C}}\bigr)$,\label{cor.infchanges.2}
				\item $\AppUnbound{\phi}{\cur{D}}\cap\App{\phi}{\cur{C}}\subset\AppUnbound{\phi}{\cur{C}}$,\label{cor.infchanges.3}
				\item $\AppUnbound{\phi}{\cur{C}}-\bigl(\AppUnbound{\phi}{\cur{D}}\cap\App{\phi}{\cur{C}}\bigr)=
							\AppUnbound{\phi}{\cur{C}}-\AppUnbound{\phi}{\cur{D}}$\newline\vspace{0cm}
							$\hspace{2.5cm}=\AppUnbound{\phi}{\cur{C}}\cap\AppBound{\phi}{\cur{D}}$,\label{cor.infchanges.4}
				\item If $x\in\SRem{\phi}{\cur{C}}-\SRem{\phi}{\cur{D}}$ then for all $(\psi,U)\in\Bound{\man{M}}$ so that $(\psi,U)\covers(\phi,\{x\})$ we have that
							$U\cap\Irreg{\phi}\cap\bigl(\AppBound{\psi}{\cur{D}}-\AppBound{\psi}{\cur{C}}\bigr)\neq\EmptySet$,\label{cor.infchanges.5}
				\item $\SEss{\phi}{\cur{C}}\subset\SEss{\phi}{\cur{D}}$,\label{cor.infchanges.6}
				\item $\SEss{\phi}{\cur{D}}-\SEss{\phi}{\cur{C}}=\SRem{\phi}{\cur{C}}-\SRem{\phi}{\cur{D}}$.\label{cor.infchanges.7}
			\end{enumerate}
		\end{Cor}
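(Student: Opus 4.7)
The plan is to dispatch the seven items in order, leaning on the three inclusions established in Proposition~\ref{prop.InfChanges} together with the disjoint decomposition $\partial\phi(\man{M}) = \Unapp{\phi}{\cur{C}}\sqcup\AppUnbound{\phi}{\cur{C}}\sqcup\AppBound{\phi}{\cur{C}}$ (and the analogous decomposition for $\cur{D}$), which is immediate from the definitions of $\App$, $\AppBound$, and $\AppUnbound$.

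Items (\ref{cor.infchanges.1}) and (\ref{cor.infchanges.2}) are essentially relative-complement identities. Since $\App{\phi}{\cur{D}} \subset \partial\phi(\man{M})$ and $\Unapp{\phi}{\cur{C}} = \partial\phi(\man{M}) - \App{\phi}{\cur{C}}$, (\ref{cor.infchanges.1}) reduces to the identity $A - B = A \cap (X - B)$ with $X = \partial\phi(\man{M})$. For (\ref{cor.infchanges.2}) I would use the three-term decomposition to write $\partial\phi(\man{M}) - \AppBound{\phi}{\cur{C}} = \AppUnbound{\phi}{\cur{C}} \cup \Unapp{\phi}{\cur{C}}$ and invoke the same identity.

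For (\ref{cor.infchanges.3}) I would argue by contradiction: given $p \in \AppUnbound{\phi}{\cur{D}} \cap \App{\phi}{\cur{C}}$, if $p \notin \AppUnbound{\phi}{\cur{C}}$ then $p \in \AppBound{\phi}{\cur{C}}$, which by Proposition~\ref{prop.InfChanges} forces $p \in \AppBound{\phi}{\cur{D}}$, contradicting disjointness with $\AppUnbound{\phi}{\cur{D}}$. For (\ref{cor.infchanges.4}), the first equality is the identity $A - (B \cap C) = A - B$ whenever $A \subset C$, applied with $A = \AppUnbound{\phi}{\cur{C}}$ and $C = \App{\phi}{\cur{C}}$; the second follows because $\AppUnbound{\phi}{\cur{C}} \subset \App{\phi}{\cur{C}} \subset \App{\phi}{\cur{D}} = \AppBound{\phi}{\cur{D}} \sqcup \AppUnbound{\phi}{\cur{D}}$, so subtracting $\AppUnbound{\phi}{\cur{D}}$ leaves a subset of $\AppBound{\phi}{\cur{D}}$.

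Item (\ref{cor.infchanges.5}) is the main obstacle. Given $x \in \SRem{\phi}{\cur{C}} - \SRem{\phi}{\cur{D}}$, I would extract a witness $(\psi, U) \covers (\phi, \{x\})$ with $U \subset \NSing{\psi}{\cur{C}}$ supplied by $x \in \SRem{\phi}{\cur{C}}$, and then apply $x \notin \SRem{\phi}{\cur{D}}$ to this same pair to conclude $U \not\subset \NSing{\psi}{\cur{D}}$. This produces some $q \in U$ with $q \in \Sing{\psi}{\cur{D}} = \Irreg{\psi} \cap \AppBound{\psi}{\cur{D}}$ by Proposition~\ref{Prop.Altdefinition}, while $q \in U \subset \NSing{\psi}{\cur{C}} = \Reg{\psi} \cup \AppUnbound{\psi}{\cur{C}} \cup \Unapp{\psi}{\cur{C}}$ combined with $q \in \Irreg{\psi}$ rules out $q \in \AppBound{\psi}{\cur{C}}$; this is precisely the sought-after element of $U \cap \Irreg{\psi} \cap (\AppBound{\psi}{\cur{D}} - \AppBound{\psi}{\cur{C}})$. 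The delicate point is the interaction of the existential quantifier in the definition of $\SRem{\phi}{\cur{C}}$ with the universal one concealed in $x \notin \SRem{\phi}{\cur{D}}$, and care is needed to ensure both conditions are invoked on the \emph{same} $(\psi, U)$. Finally, (\ref{cor.infchanges.6}) follows by complementing the inclusion $\SRem{\phi}{\cur{D}} \subset \SRem{\phi}{\cur{C}}$ inside $\partial\phi(\man{M})$, and (\ref{cor.infchanges.7}) is the identity $(X - B) - (X - A) = A - B$ valid whenever $B \subset A \subset X$, applied with $X = \partial\phi(\man{M})$, $A = \SRem{\phi}{\cur{C}}$, $B = \SRem{\phi}{\cur{D}}$.
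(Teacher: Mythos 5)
Your proof is correct and follows essentially the same route the paper indicates: items (1)--(4) and (6)--(7) by Proposition \ref{prop.InfChanges}, the disjoint decomposition of $\partial\phi(\man{M})$ into $\Unapp{}{}$, $\AppUnbound{}{}$, $\AppBound{}{}$ and elementary relative-complement identities, and item (5) by combining the existential witness from $\SRem{\phi}{\cur{C}}$ with the universal statement hidden in $x\notin\SRem{\phi}{\cur{D}}$ and the identity $\Sing{\psi}{\cur{D}}=\Irreg{\psi}\cap\AppBound{\psi}{\cur{D}}$ from Proposition \ref{Prop.Altdefinition}. One caveat on item (5): your argument (correctly) yields the conclusion only for those covering pairs $(\psi,U)$ satisfying $U\subset\NSing{\psi}{\cur{C}}$ --- the step ruling out $q\in\AppBound{\psi}{\cur{C}}$ uses that inclusion --- so it proves the statement with that restriction added (as in item (7) of Proposition \ref{prop.UniChanges}, and with $\Irreg{\psi}$ in place of $\Irreg{\phi}$), which is evidently what is intended, rather than the unrestricted universal quantification as the item is literally worded.
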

		\begin{proof}
			The proofs of \eqref{cor.infchanges.1}, \eqref{cor.infchanges.2}, \eqref{cor.infchanges.3}, \eqref{cor.infchanges.4} and \eqref{cor.infchanges.6}  
			follow directly from Proposition
			\ref{prop.InfChanges} and the definitions. We include them here for completeness.
			The proof of item \eqref{cor.infchanges.5} follows from Proposition \ref{Prop.Altdefinition}
			and the definition of $\SRem{\phi}{\cur{C}}$.
			The proof of item \eqref{cor.infchanges.7} follows from the definition of
			$\SEss{\phi}{\cur{C}}$ and the standard set relations
			$A-(B- C)=(A\cap C)\cup(A- B)$
			and $(A- B)\cap C=A\cap(C- B)$.
		\end{proof}
		
		Figures \ref{figure appinf} and \ref{figure remsing}
		give a graphic representation of Proposition \ref{prop.InfChanges}
		and Corollary \ref{cor.InfChanges}.
		Proposition \ref{prop.InfChanges} and Corollary \ref{cor.InfChanges}
		now let us determine how the boundary point classification itself
		changes.
		

		\begin{figure*}
		  \centering
		  \def\svgwidth{\columnwidth}
		  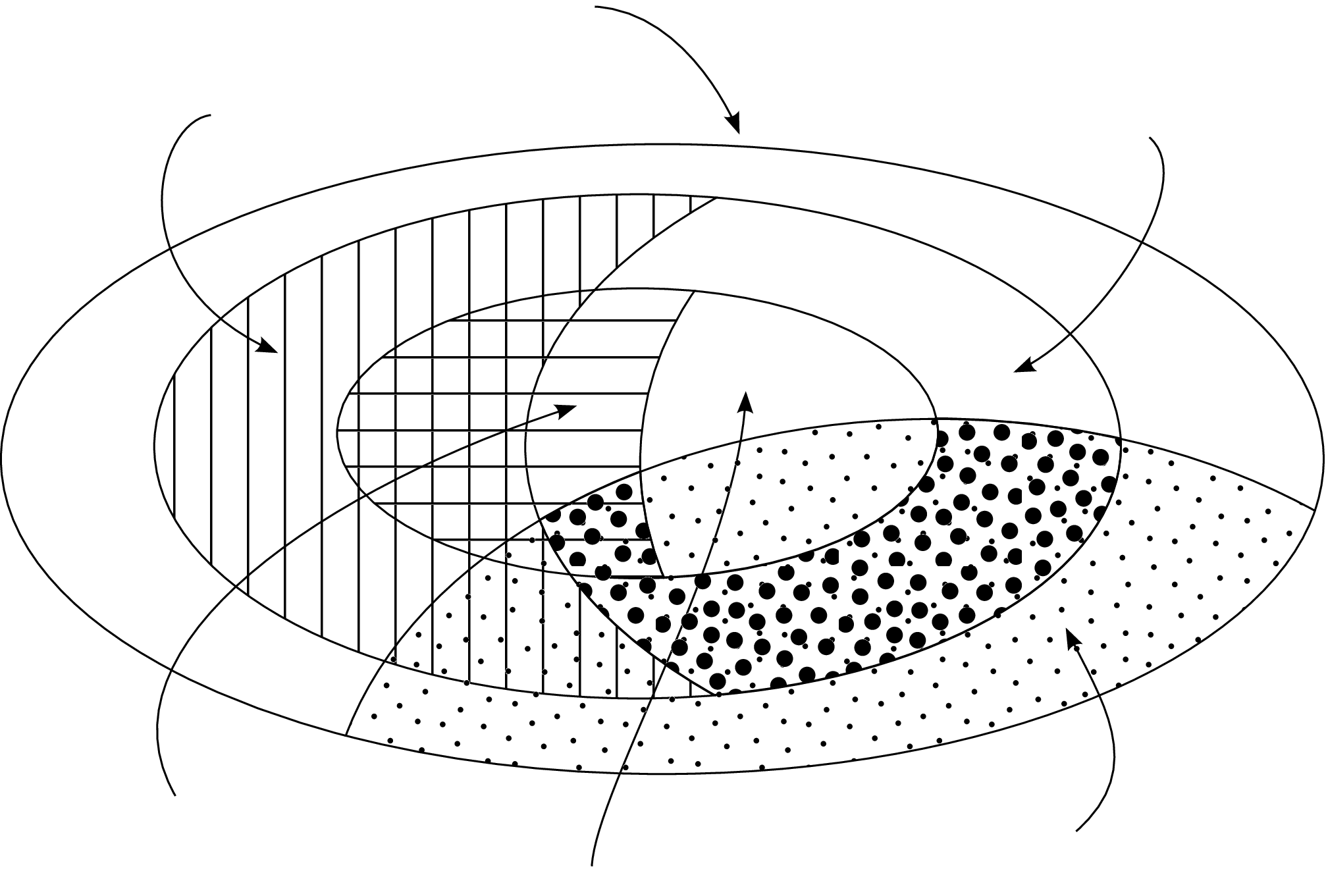
		  \caption{A venn diagram of $\partial\phi(\man{M})$, $\Reg{\phi}$,
		    $\Irreg{\phi}$, $\App{\phi}{\cur{C}}$,
		    $\AppBound{\phi}{\cur{C}}$, $\AppUnbound{\phi}{\cur{C}}$,
		    $\App{\phi}{\cur{D}}$, $\AppBound{\phi}{\cur{D}}$ and 
		    $\AppUnbound{\phi}{\cur{D}}$.
		    The outer oval is $\partial\phi(\man{M})$. The dotted region,
		    both small and large dots,
		    is $\Irreg{\phi}$.		  
		    The middle oval is $\App{\phi}{\cur{D}}$, the inner oval is 
		    $\App{\phi}{\cur{C}}$. The region ruled by vertical lines is 
		    $\AppUnbound{\phi}{\cur{D}}$, the region of the middle oval
		    not ruled by vertical lines is $\AppBound{\phi}{\cur{D}}$. The region
		    ruled by horizontal lines is $\AppUnbound{\phi}{\cur{C}}$, the region 
		    of the
		    inner oval not
		    ruled by horizontal lines is $\AppBound{\phi}{\cur{C}}$. 
		    The region ruled by horizontal lines and not ruled by vertical lines
		    is described by item \eqref{cor.infchanges.4} of Corollary 
		    \ref{cor.InfChanges}. The region covered by big dots is
		    $\Irreg{\phi}\cap\bigl(\AppBound{\phi}{\cur{D}}-\AppBound{\phi}{\cur{C}}
		    \bigr)$, see item \eqref{cor.infchanges.5} of Corollary 
		    \ref{cor.InfChanges}.}\label{figure appinf}
		\end{figure*}
		
		
		\begin{figure*}
		  \centering
		  \def\svgwidth{\columnwidth}
		  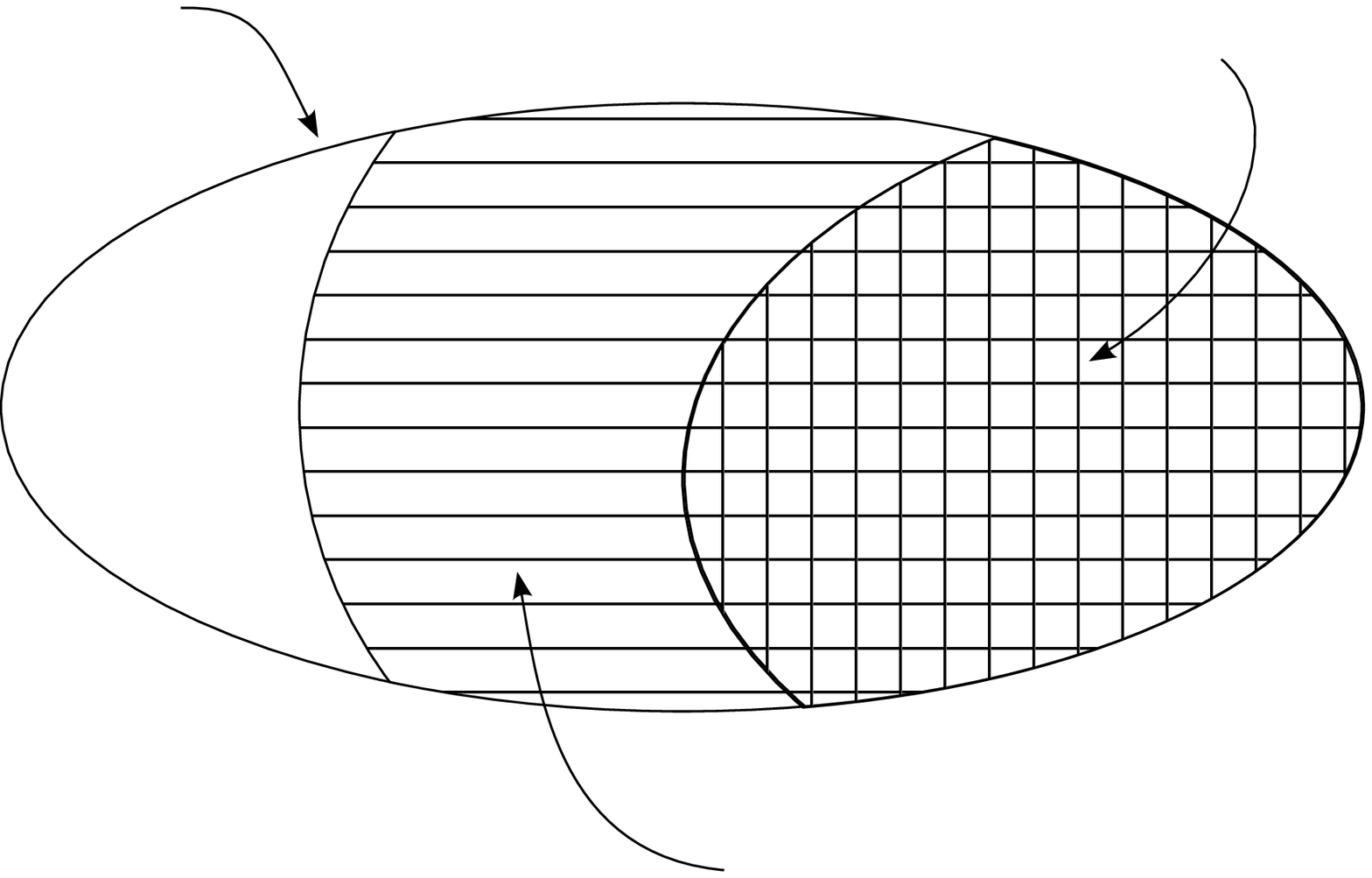
		  \caption{A venn diagram of $\partial\phi(\man{M}),\SRem{\phi}{\cur{C}},\SEss{\phi}{\cur{C}},\SRem{\phi}{\cur{D}}$ and $\SEss{\phi}{\cur{D}}$.
		  The oval is $\partial\phi(\man{M})$. The region ruled by horizontal lines is $\SRem{\phi}{\cur{C}}$, the region not ruled by horizontal lines
		  is $\SEss{\phi}{\cur{C}}$. The region ruled by vertical lines is $\SRem{\phi}{\cur{D}}$, the region not ruled by
		  vertical lines is $\SEss{\phi}{\cur{D}}$. The region of the oval that is ruled by horizontal lines and not ruled by vertical lines
		  is described by item \eqref{cor.infchanges.5} of Corollary \ref{cor.InfChanges}, see Figure \ref{figure appinf}.}\label{figure remsing}
		\end{figure*}

		\begin{Cor}\label{cor.bclasschange}
			Let $\cur{C},\cur{D}\in\BPPSet$ so that $\cur{C}\subset_{\textnormal{b.p.p.}}\cur{D}$ then for all $\phi\in\Env$,
				\begin{enumerate}
					\item  $\Inf{\phi}{\cur{D}}\cap\App{\phi}{\cur{C}}\subset\Inf{\phi}{\cur{C}}.$\vspace{4pt}\newline
									$\Inf{\phi}{\cur{C}}-\bigl(\Inf{\phi}{\cur{D}}\cap\App{\phi}{\cur{C}}\bigr)=\Inf{\phi}{\cur{C}}-\Inf{\phi}{\cur{D}}$\newline\vspace{0cm}
									$\hspace{60pt}=\Irreg{\phi}\cap
									\bigl(\AppUnbound{\phi}{\cur{C}}-\AppUnbound{\phi}{\cur{D}}\bigr).$\vspace{0pt}\newline\label{cor.bclasschange.1}
					\item  $\RemInf{\phi}{\cur{D}}\cap\App{\phi}{\cur{C}}\subset\RemInf{\phi}{\cur{C}}.$\vspace{4pt}\newline
									$\RemInf{\phi}{\cur{C}}-\bigl(\RemInf{\phi}{\cur{D}}\cap\App{\phi}{\cur{C}}\bigr)=\RemInf{\phi}{\cur{C}}-\RemInf{\phi}{\cur{D}}$\newline\vspace{0cm}
									$\hspace{60pt}=\Irreg{\phi}\cap\IRem{\phi}\cap\bigl(\AppUnbound{\phi}{\cur{C}}-\AppUnbound{\phi}{\cur{D}}\bigr).$\vspace{0pt}\newline
									\label{cor.bclasschange.2}
					\item  $\EssInf{\phi}{\cur{D}}\cap\App{\phi}{\cur{C}}\subset\EssInf{\phi}{\cur{C}}.$\vspace{4pt}\newline
									$\EssInf{\phi}{\cur{C}}-\bigl(\EssInf{\phi}{\cur{D}}\cap\App{\phi}{\cur{C}}\bigr)=\EssInf{\phi}{\cur{C}}-\EssInf{\phi}{\cur{D}}$\newline\vspace{0cm}
									$\hspace{60pt}=\Irreg{\phi}\cap\IEss{\phi}\cap\bigl(\AppUnbound{\phi}{\cur{C}}-\AppUnbound{\phi}{\cur{D}}\bigr).$\vspace{0pt}\newline
									\label{cor.bclasschange.3}
					\item  $\MixInf{\phi}{\cur{D}}\cap\App{\phi}{\cur{C}}\subset\MixInf{\phi}{\cur{C}}.$\vspace{4pt}\newline
									$\MixInf{\phi}{\cur{C}}-\bigl(\MixInf{\phi}{\cur{D}}\cap\App{\phi}{\cur{C}}\bigr)=\MixInf{\phi}{\cur{C}}-\MixInf{\phi}{\cur{D}}$\newline\vspace{0cm}
									$\hspace{40pt}=
										\Irreg{\phi}\cap\IEss{\phi}\cap\Mix{\phi}\cap\bigl(\AppUnbound{\phi}{\cur{C}}-\AppUnbound{\phi}{\cur{D}}\bigr).$\vspace{0pt}\newline
									\label{cor.bclasschange.4}
					\item  $\PureInf{\phi}{\cur{D}}\cap\App{\phi}{\cur{C}}\subset\PureInf{\phi}{\cur{C}}.$\vspace{4pt}\newline
									$\PureInf{\phi}{\cur{C}}-\bigl(\PureInf{\phi}{\cur{D}}\cap\App{\phi}{\cur{C}}\bigr)=\PureInf{\phi}{\cur{C}}-\PureInf{\phi}{\cur{D}}$\newline\vspace{0cm}
									$\hspace{40pt}=
										\Irreg{\phi}\cap\IEss{\phi}\cap\Pure{\phi}\cap\bigl(\AppUnbound{\phi}{\cur{C}}-\AppUnbound{\phi}{\cur{D}}\bigr).$\vspace{0pt}\newline
									\label{cor.bclasschange.5}
					\item  $\Sing{\phi}{\cur{C}}\subset\Sing{\phi}{\cur{D}}$\vspace{4pt}\newline
									$\Sing{\phi}{\cur{D}}-\Sing{\phi}{\cur{C}}=\Irreg{\phi}\cap\left(\AppBound{\phi}{\cur{D}}-\AppBound{\phi}{\cur{C}}\right).$\vspace{0pt}\newline
									\label{cor.bclasschange.6}
					\item  $\NSing{\phi}{\cur{D}}\subset\NSing{\phi}{\cur{C}}.$\vspace{4pt}\newline
									$\NSing{\phi}{\cur{C}}-\NSing{\phi}{\cur{D}}
									=\Sing{\phi}{\cur{D}}-\Sing{\phi}{\cur{D}}.$\vspace{0pt}\newline\label{cor.bclasschange.7}
					\item  $\EssSing{\phi}{\cur{C}}\subset\EssSing{\phi}{\cur{D}}.$ \vspace{4pt}\newline
									$\EssSing{\phi}{\cur{D}}-\EssSing{\phi}{\cur{C}}=\EssSing{\phi}{\cur{D}}\cap$\newline\vspace{0cm}
									$\hspace{20pt}\Bigl(\bigl(\AppBound{\phi}{\cur{D}}-\AppBound{\phi}{\cur{C}}\bigr)\bigcup
									\bigl(\SEss{\phi}{\cur{D}}-\SEss{\phi}{\cur{C}}\bigr)\Bigr)$.\vspace{0pt}\newline\label{cor.bclasschange.8}
					\item  $\MixSing{\phi}{\cur{C}}\subset\MixSing{\phi}{\cur{D}}$ \vspace{4pt}\newline
									$\MixSing{\phi}{\cur{D}}-\MixSing{\phi}{\cur{C}}$\newline\vspace{0cm}
									$\hspace{40pt}=\Mix{\phi}\cap\bigl(\EssSing{\phi}{\cur{D}}-\EssSing{\phi}{\cur{C}}\bigr).$\vspace{0pt}\newline\label{cor.bclasschange.9}
					\item  $\PureSing{\phi}{\cur{C}}\subset\PureSing{\phi}{\cur{D}}$\vspace{4pt}\newline
									$\PureSing{\phi}{\cur{D}}-\PureSing{\phi}{\cur{C}}$\newline\vspace{0cm}
									$\hspace{40pt}=\Pure{\phi}\cap\bigl(\EssSing{\phi}{\cur{D}}-\EssSing{\phi}{\cur{C}}\bigr).$\vspace{0pt}\newline\label{cor.bclasschange.10}
					\item  $\RemSing{\phi}{\cur{D}}\cap\Sing{\phi}{\cur{C}}\subset\RemSing{\phi}{\cur{C}}$\vspace{4pt}\newline
									$\RemSing{\phi}{\cur{D}}\cap\RemSing{\phi}{\cur{C}}$\newline\vspace{0cm}
									$\hspace{40pt}=\Irreg{\phi}\cap\AppBound{\phi}{\cur{C}}\cap\SRem{\phi}{\cur{D}}.$\vspace{4pt}\newline
									$\RemSing{\phi}{\cur{C}}-\bigl(\RemSing{\phi}{\cur{D}}\cap\Sing{\phi}{\cur{C}}\bigr)=\RemSing{\phi}{\cur{C}}-\RemSing{\phi}{\cur{D}}$\newline\vspace{0cm}
									$\hspace{40pt}=\Irreg{\phi}\cap\AppBound{\phi}{\cur{C}}\cap\bigl(\SRem{\phi}{\cur{C}}-\SRem{\phi}{\cur{D}}\bigr).$\vspace{4pt}\newline
									$\RemSing{\phi}{\cur{D}}-\RemSing{\phi}{\cur{C}}$\newline\vspace{0cm}
									$\hspace{40pt}=\Irreg{\phi}\cap\SRem{\phi}{\cur{D}}\cap\left(\AppBound{\phi}{\cur{D}}-\AppBound{\phi}{\cur{C}}\right).$\vspace{0pt}\newline
									\label{cor.bclasschange.11}
				\end{enumerate}
						Membership of $\AppUnbound{\phi}{\cur{C}}-\AppUnbound{\phi}{\cur{D}}$,
		$\AppBound{\phi}{\cur{D}}-\AppBound{\phi}{\cur{C}}$, $\SEss{\phi}{\cur{D}}-\SEss{\phi}{\cur{C}}$
		and $\SRem{\phi}{\cur{C}}-\SRem{\phi}{\cur{C}}$ can be checked using Corollary \ref{cor.InfChanges}.
		\end{Cor}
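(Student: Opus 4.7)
The plan is to reduce each of the eleven items to set-theoretic manipulations on the primitive sets $\Irreg{\phi}$, $\App{\phi}{-}$, $\AppBound{\phi}{-}$, $\AppUnbound{\phi}{-}$, $\IRem{\phi}$, $\IEss{\phi}$, $\SRem{\phi}{-}$, $\SEss{\phi}{-}$, $\Mix{\phi}$, $\Pure{\phi}$, using the rewritings from Proposition \ref{Prop.Altdefinition} together with the inclusions in Proposition \ref{prop.InfChanges} and Corollary \ref{cor.InfChanges}. Since only $\App{\phi}{-}$, $\AppBound{\phi}{-}$, $\AppUnbound{\phi}{-}$, $\SRem{\phi}{-}$ and $\SEss{\phi}{-}$ depend on $\cur{C}$, all of the differences must ultimately collapse to expressions involving only these five families, which is why the right-hand sides of the corollary look as they do.

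For items \eqref{cor.bclasschange.1}--\eqref{cor.bclasschange.5}, every Inf-class has the shape $\Irreg{\phi}\cap\AppUnbound{\phi}{\cur{C}}\cap X$, where $X$ is a $\cur{C}$-independent intersection drawn from $\{\IRem{\phi},\IEss{\phi},\Mix{\phi},\Pure{\phi}\}$. First I would use item \eqref{cor.infchanges.3} of Corollary \ref{cor.InfChanges} to derive the inclusion $Z(\cur{D})\cap\App{\phi}{\cur{C}}\subset Z(\cur{C})$; then I would compute the difference by applying item \eqref{cor.infchanges.4} of the same corollary, which identifies $\AppUnbound{\phi}{\cur{C}}-\AppUnbound{\phi}{\cur{D}}$ with $\AppUnbound{\phi}{\cur{C}}\cap\AppBound{\phi}{\cur{D}}$. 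All five items follow from a single computation with only the trailing factor $X$ varying.

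For items \eqref{cor.bclasschange.6}--\eqref{cor.bclasschange.10} I would use $\AppBound{\phi}{\cur{C}}\subset\AppBound{\phi}{\cur{D}}$ from Proposition \ref{prop.InfChanges} for \eqref{cor.bclasschange.6}, dualise it by complementation (invoking $A-(B\cap C)=(A- B)\cup(A- C)$) for \eqref{cor.bclasschange.7}, and for \eqref{cor.bclasschange.8}--\eqref{cor.bclasschange.10} write each class as $\Irreg{\phi}\cap\AppBound{\phi}{-}\cap\SEss{\phi}{-}\cap Y$ with $Y\in\{\partial\phi(\man{M}),\Mix{\phi},\Pure{\phi}\}$. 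Both $\AppBound{\phi}{-}$ and $\SEss{\phi}{-}$ grow when enlarging $\cur{C}$ to $\cur{D}$ (the second by item \eqref{cor.infchanges.6} of Corollary \ref{cor.InfChanges}), so the inclusion is immediate and the difference splits by distributivity into the two pieces stated.

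The genuinely delicate step is item \eqref{cor.bclasschange.11}, since $\AppBound{\phi}{-}$ grows while $\SRem{\phi}{-}$ shrinks. Expanding $\RemSing{\phi}{-}=\Irreg{\phi}\cap\AppBound{\phi}{-}\cap\SRem{\phi}{-}$ and intersecting, the larger $\AppBound{\phi}{\cur{D}}$ absorbs $\AppBound{\phi}{\cur{C}}$ and the smaller $\SRem{\phi}{\cur{D}}$ absorbs $\SRem{\phi}{\cur{C}}$, giving $\RemSing{\phi}{\cur{D}}\cap\RemSing{\phi}{\cur{C}}=\Irreg{\phi}\cap\AppBound{\phi}{\cur{C}}\cap\SRem{\phi}{\cur{D}}$. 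For the two asymmetric differences I would argue pointwise: a point in $\RemSing{\phi}{\cur{C}}\setminus\RemSing{\phi}{\cur{D}}$ already lies in $\AppBound{\phi}{\cur{C}}\subset\AppBound{\phi}{\cur{D}}$, hence must fail $\SRem{\phi}{\cur{D}}$, yielding the second formula; conversely a point in $\RemSing{\phi}{\cur{D}}\setminus\RemSing{\phi}{\cur{C}}$ automatically lies in $\SRem{\phi}{\cur{C}}$ (since $\SRem{\phi}{\cur{D}}\subset\SRem{\phi}{\cur{C}}$), hence must fail $\AppBound{\phi}{\cur{C}}$, producing the third formula. Keeping the two monotonicities straight is the only real obstacle; the rest is mechanical bookkeeping using the identities $A-(B\cap C)=(A- B)\cup(A- C)$ and $(A- B)\cap C=A\cap(C- B)$.
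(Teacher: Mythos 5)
Your proposal is correct and follows essentially the same route as the paper: rewrite every class via Proposition \ref{Prop.Altdefinition} as an intersection of $\Irreg{\phi}$, the approachability sets and $\SRem{\phi}{\cdot}$/$\SEss{\phi}{\cdot}$ with $\cur{C}$-independent factors, then apply the monotonicity statements of Proposition \ref{prop.InfChanges} and Corollary \ref{cor.InfChanges} together with the standard identities $(A\cap B)-(A\cap C)=A\cap(B-C)$ and $A-(B\cap C)=(A-B)\cup(A-C)$; your pointwise handling of the opposing monotonicities in item \eqref{cor.bclasschange.11} is exactly the intended argument. Note only that your computation for item \eqref{cor.bclasschange.7} correctly yields $\Sing{\phi}{\cur{D}}-\Sing{\phi}{\cur{C}}$, whereas the printed statement has a typo reading $\Sing{\phi}{\cur{D}}-\Sing{\phi}{\cur{D}}$.
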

		\begin{proof}
			Items \eqref{cor.bclasschange.1}, \eqref{cor.bclasschange.2}, \eqref{cor.bclasschange.3},
			\eqref{cor.bclasschange.4} and \eqref{cor.bclasschange.5} follow directly from Proposition \ref{Prop.Altdefinition}
			and Corollary \ref{cor.InfChanges}. Item \eqref{cor.bclasschange.6} follows from Propositions
			\ref{Prop.Altdefinition} and \ref{prop.InfChanges}. The first half of
			item \eqref{cor.bclasschange.7}
			was proven during the proof of Proposition \ref{Prop.Altdefinition}.
			We include it here for completeness. The second half 
			follows from Definition \ref{def.singpoints}
			and as $A-(B- C)=(A\cap C)\cup(A- B)$
			and $(A- B)\cap C=A\cap(C- B)$.
			Item \eqref{cor.bclasschange.8} follows from
			Proposition \ref{Prop.Altdefinition},
			as $(A\cap B)-(A\cap C)=A\cap(B- C)$,
			$A-(B\cap C)=(A- B)\cup (A- C)$,
			$(B\cap A)- C=B\cap(A- C)$ and the distribution
			law for $\cap$ and $\cup$. Items \eqref{cor.bclasschange.9} and \eqref{cor.bclasschange.10} follow from
			Proposition \ref{Prop.Altdefinition} and the proof of item \eqref{cor.bclasschange.8}.
			Item \eqref{cor.bclasschange.11} follows from Propositions \ref{Prop.Altdefinition} and \ref{prop.InfChanges} and the
			set relations given above.
		\end{proof}

		Corollary \ref{cor.bclasschange} gives the relationships between
		the sets making up the boundary classifications with respect to 
		$\cur{C}$ and $\cur{D}$. We can rewrite the results above to emphasise
		the behaviour of individual boundary points. Since this is the form of the
		results that is likely to be the most useful, we denote it
		as a theorem.
		
		\begin{Thm}\label{boundaryPointChanges}
			Let $\cur{C},\cur{D}\in\BPPSet$ so that $\cur{C}\subset_{\textnormal{b.p.p.}}\cur{D}$ then for all $\phi\in\Env$ we have the following results.
			\begin{enumerate}
				\item If $x\in\App{\phi}{\cur{C}}$ then $x\in\App{\phi}{\cur{D}}$.
				\item If $x\in\Unapp{\phi}{\cur{C}}$ then either $x\in\Unapp{\phi}{\cur{D}}$ or $x\in\App{\phi}{\cur{D}}$.
				\item If $x\in\Inf{\phi}{\cur{C}}$ then
							\begin{enumerate}
								\item $x\in\AppUnbound{\phi}{\cur{D}}$ implies $x\in\Inf{\phi}{\cur{D}}$
								\item $x\not\in\AppUnbound{\phi}{\cur{D}}$ implies $x\in\Sing{\phi}{\cur{D}}$
							\end{enumerate}
				\item If $x\in\RemInf{\phi}{\cur{C}}$ then
							\begin{enumerate}
								\item $x\in\AppUnbound{\phi}{\cur{D}}$ implies $x\in\RemInf{\phi}{\cur{D}}$
								\item $x\not\in\AppUnbound{\phi}{\cur{D}}$ implies $x\in\RemSing{\phi}{\cur{D}}$
							\end{enumerate}
				\item If $x\in\EssInf{\phi}{\cur{C}}$ then
							\begin{enumerate}
								\item $x\in\AppUnbound{\phi}{\cur{D}}$ implies $x\in\EssInf{\phi}{\cur{D}}$
								\item $x\not\in\AppUnbound{\phi}{\cur{D}}$ implies $x\in\Sing{\phi}{\cur{C}}$ and
											\begin{enumerate}
												\item $x\in\SRem{\phi}{\cur{D}}$ implies $x\in\RemSing{\phi}{\cur{D}}$.
												\item $x\not\in\SRem{\phi}{\cur{D}}$ implies $x\in\EssSing{\phi}{\cur{D}}$.
											\end{enumerate}
							\end{enumerate}
				\item If $x\in\MixInf{\phi}{\cur{C}}$ then
							\begin{enumerate}
								\item $x\in\AppUnbound{\phi}{\cur{D}}$ implies $x\in\MixInf{\phi}{\cur{D}}$.
								\item $x\not\in\AppUnbound{\phi}{\cur{D}}$ implies $x\in\Sing{\phi}{\cur{D}}$ and
											\begin{enumerate}
												\item $x\in\SRem{\phi}{\cur{D}}$ implies $x\in\RemSing{\phi}{\cur{D}}$.
												\item $x\not\in\SRem{\phi}{\cur{D}}$ implies $x\in\MixSing{\phi}{\cur{D}}$.
											\end{enumerate}
							\end{enumerate}
				\item If $x\in\PureInf{\phi}{\cur{C}}$ then
							\begin{enumerate}
								\item $x\in\AppUnbound{\phi}{\cur{D}}$ implies $x\in\PureInf{\phi}{\cur{D}}$.
								\item $x\not\in\AppUnbound{\phi}{\cur{D}}$ implies $x\in\Sing{\phi}{\cur{D}}$ and
											\begin{enumerate}
												\item $x\in\SRem{\phi}{\cur{D}}$ implies $x\in\RemSing{\phi}{\cur{D}}$.
												\item $x\not\in\SRem{\phi}{\cur{D}}$ implies $x\in\PureSing{\phi}{\cur{D}}$.
											\end{enumerate}
							\end{enumerate}
				\item If $x\in\Sing{\phi}{\cur{C}}$ then $x\in\Sing{\phi}{\cur{D}}$.
				\item If $x\in\NSing{\phi}{\cur{C}}$ then
							\begin{enumerate}
								\item $x\in\Reg{\phi}$ implies $x\in\NSing{\phi}{\cur{D}}$.
								\item $x\in\AppUnbound{\phi}{\cur{D}}\cup\Unapp{\phi}{\cur{D}}$ implies $x\in\NSing{\phi}{\cur{D}}$.
								\item $x\not\in\AppUnbound{\phi}{\cur{D}}\cup\Unapp{\phi}{\cur{D}}$ implies $x\in\Sing{\phi}{\cur{D}}$.
							\end{enumerate}
				\item If $x\in\RemSing{\phi}{\cur{C}}$ then 
							\begin{enumerate}
								\item $x\in\SRem{\phi}{\cur{D}}$ implies $x\in\RemSing{\phi}{\cur{D}}$.
								\item $x\not\in\SRem{\phi}{\cur{D}}$ implies $x\in\EssSing{\phi}{\cur{D}}$.
							\end{enumerate}
				\item If $x\in\EssSing{\phi}{\cur{C}}$ then $x\in\EssSing{\phi}{\cur{D}}$.
				\item If $x\in\MixSing{\phi}{\cur{C}}$ then $x\in\MixSing{\phi}{\cur{D}}$.
				\item If $x\in\PureSing{\phi}{\cur{C}}$ then $x\in\PureSing{\phi}{\cur{D}}$.
			\end{enumerate}
		\end{Thm}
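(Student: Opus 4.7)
The plan is to read the theorem as a pointwise reformulation of the set identities already established in Corollary \ref{cor.bclasschange}, so that the work consists of chasing each hypothesis ``$x$ is in such-and-such $\cur{C}$-class'' through the conjunctive description of classes given in Proposition \ref{Prop.Altdefinition}, and verifying via the monotonicity facts of Proposition \ref{prop.InfChanges} and Corollary \ref{cor.InfChanges} that the resulting conjuncts place $x$ in the claimed $\cur{D}$-class (or list of $\cur{D}$-classes). Concretely, the three facts doing all the work are: $\App{\phi}{\cur{C}}\subset\App{\phi}{\cur{D}}$, $\AppBound{\phi}{\cur{C}}\subset\AppBound{\phi}{\cur{D}}$, and $\SRem{\phi}{\cur{D}}\subset\SRem{\phi}{\cur{C}}$ (equivalently $\SEss{\phi}{\cur{C}}\subset\SEss{\phi}{\cur{D}}$).

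The guiding observation is that in Proposition \ref{Prop.Altdefinition} the conjuncts describing each class split into those depending on the b.p.p.\ set, namely $\App$, $\AppBound$, $\AppUnbound$, $\SRem$, $\SEss$, and those independent of it, namely $\Irreg{\phi}$, $\IRem{\phi}$, $\IEss{\phi}$, $\Mix{\phi}$, $\Pure{\phi}$. The latter conjuncts are automatically preserved when passing from $\cur{C}$ to $\cur{D}$, so in each item I only need to track how the membership of $x$ in the $\cur{C}$-dependent sets can shift. Under enlargement there are exactly three possible shifts: an unapproachable point becoming approachable, an $\AppUnbound{\phi}{\cur{C}}$ point becoming $\AppBound{\phi}{\cur{D}}$ (because a new bounded curve in $\cur{D}$ now limits to it), and a removable singularity becoming essential. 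Every case split in the theorem corresponds to exactly one of these.

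Items 1, 2, and 8 fall straight out of $\App$- and $\AppBound$-monotonicity. Item 9 requires expanding $\NSing{\phi}{\cur{C}}=\Reg{\phi}\cup\AppUnbound{\phi}{\cur{C}}\cup\Unapp{\phi}{\cur{C}}$ and splitting on which piece $x$ lies in, combined with the observation that $\Reg{\phi}$ is $\cur{C}$-independent. Items 3--7 are the substantive ones: for $x\in\Inf{\phi}{\cur{C}}=\Irreg{\phi}\cap\AppUnbound{\phi}{\cur{C}}$, the dichotomy $x\in\AppUnbound{\phi}{\cur{D}}$ versus $x\in\AppBound{\phi}{\cur{D}}$ (using $x\in\App{\phi}{\cur{C}}\subset\App{\phi}{\cur{D}}$) governs whether $x$ stays a point at infinity or turns singular; the refined sub-splits (removable/essential, mixed/pure) are then handled by the $\SRem/\SEss$ dichotomy and the $\cur{C}$-independence of $\IRem$, $\IEss$, $\Mix$, $\Pure$. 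Items 10--13 apply $\SEss{\phi}{\cur{C}}\subset\SEss{\phi}{\cur{D}}$ together with the invariance of $\Mix$, $\Pure$. The main obstacle is not conceptual but combinatorial: the bookkeeping of thirteen items with nested sub-cases, and in particular item 5(b)(i), where an essential point at infinity can become a \emph{removable} singularity under $\cur{D}$. This last phenomenon is not excluded because $\IEss{\phi}$ and $\SRem{\phi}{\cur{D}}$ are not comparable in general, which is precisely the point the author emphasised when introducing the two separate notions of removability in Definitions \ref{def.reminfpoints} and \ref{def.srem-sess}.
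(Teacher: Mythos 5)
Your proposal is correct and follows essentially the same route as the paper, whose proof simply cites Proposition \ref{Prop.Altdefinition}, Proposition \ref{prop.InfChanges}, Corollaries \ref{cor.InfChanges} and \ref{cor.bclasschange}, and the definitions; your decomposition into $\cur{C}$-dependent and $\cur{C}$-independent conjuncts is exactly the intended bookkeeping. The only detail worth making explicit is that item 4(b) needs the inclusion $\IRem{\phi}\subset\SRem{\phi}{\cur{D}}$ (noted in the paper after Definition \ref{def.srem-sess}) rather than an $\SRem/\SEss$ case split, since a removable point at infinity becomes a removable singularity unconditionally.
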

		\begin{proof}
			The proofs follow from Propositions \ref{Prop.Altdefinition} and
			\ref{prop.InfChanges}, Corollaries \ref{cor.InfChanges}
			and
			\ref{cor.bclasschange}, the set equations
			given in the proof of Corollary \ref{cor.bclasschange} and the
			relevant definitions.
		\end{proof}
		
		Figure \ref{figure bounary}, 
		gives a graphic representation of Theorem \ref{boundaryPointChanges}.  The solid arrows give the usual structure of the boundary classification, the 
		dashed arrows show how the classification may change when $\cur{C}$ is enlarged to $\cur{D}$, i.e. $\cur{C}\subset_{\textnormal{b.p.p.}}\cur{D}$. 
		Note that
		we have not included arrows based at a class that point to the same class.
		
		\begin{figure*}
		  \centering
		  \def\svgwidth{\columnwidth}
		  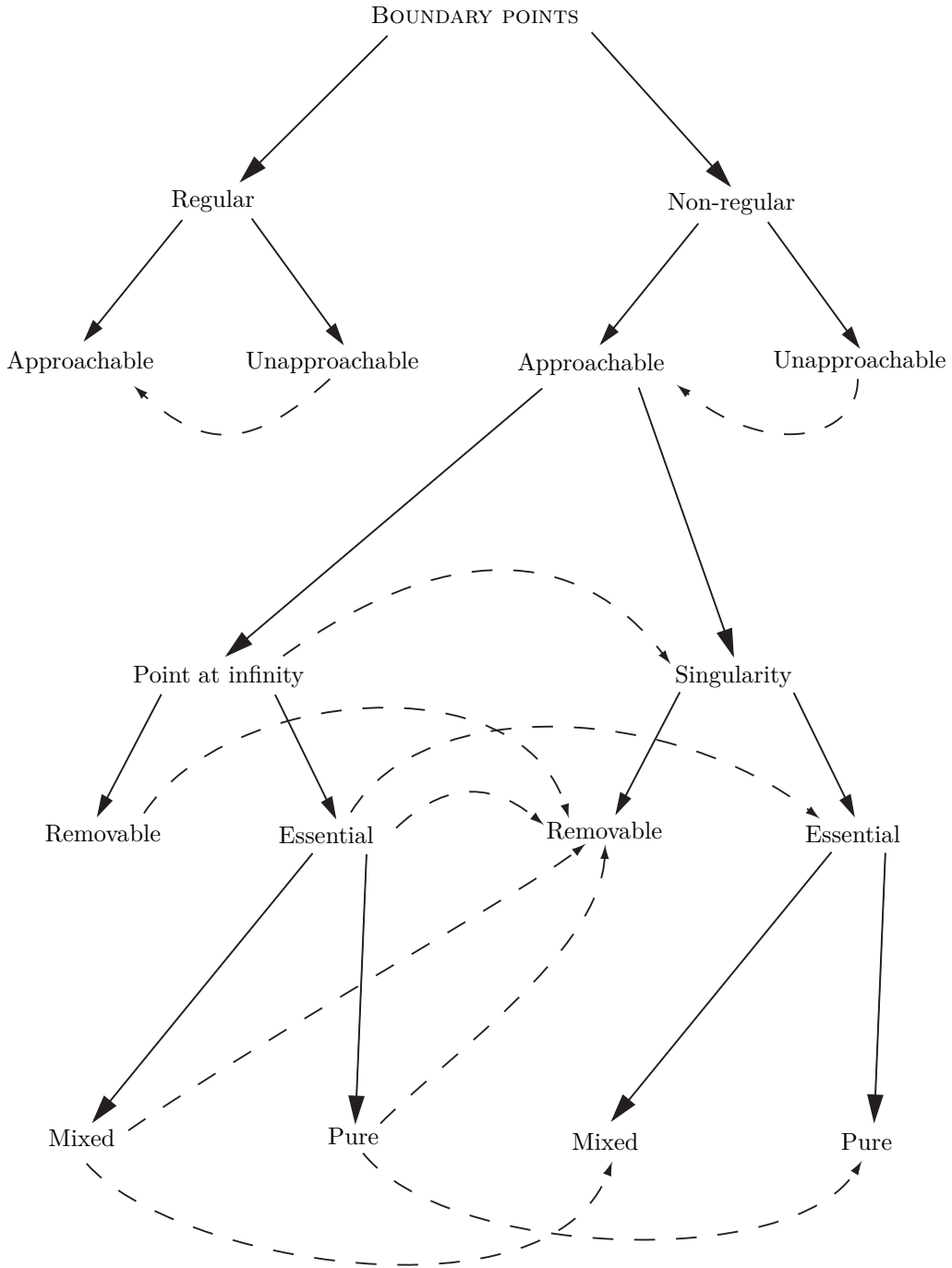
		  \caption{The changes to the boundary classification when \ensuremath{\cur{C}\subset_{\textnormal{b.p.p.}}\cur{D}}. The solid arrows give the usual structure of the boundary classification, the 
		dashed arrows show how the classification may change when $\cur{C}$ is enlarged to $\cur{D}$. Note that
		we have not included arrows based at a class that point to the same class.}\label{figure bounary}
		\end{figure*}
		
		The only surprising behaviour is that an essential point
		at infinity can become a removable singularity. This can
		only occur in a very specific set of circumstances, related to 
		the difference between $\IRem{\phi}$ and $\SRem{\phi}{\cur{C}}$.
		\begin{Lem}\label{lem.wierdbehaviour}
			Let $\cur{C},\cur{D}\in\BPPSet$ be such that
			$\cur{C}\subset_{\textnormal{b.p.p.}}\cur{D}$. 
			If $x\in\EssInf{\phi}{\cur{C}}$ 
			is such that $x\in\RemSing{\phi}{\cur{D}}$ then
			for all $(\psi,U)\in\Bound{M}$ so that $(\psi,U)\covers (\phi,\{x\})$
			we know that,
			\begin{enumerate}
				\item there exists $p\in U$ so that $p\in\Irreg{\psi}\cap\bigl(\AppUnbound{\psi}{\cur{D}}\cup\Unapp{\psi}{\cur{D}}\bigr)$,\label{lem.wierdbehaviour.1}
				\item for all $\gamma\in \cur{D}$, bounded, so that $x\in\overline{\phi\circ\gamma}$
			if $q\in U$ is such that $q\in\overline{\psi\circ\gamma}$
			then $q\in\Reg{\psi}$,\label{lem.wierdbehaviour.2}
				\item $\Reg{\psi}\cap U\neq\EmptySet$.\label{lem.wierdbehaviour.3}
			\end{enumerate}
		\end{Lem}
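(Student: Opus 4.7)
My plan is to unfold both hypotheses through Proposition \ref{Prop.Altdefinition}: from $x\in\EssInf{\phi}{\cur{C}}$ I extract $x\in\Irreg{\phi}\cap\AppUnbound{\phi}{\cur{C}}\cap\IEss{\phi}$, and from $x\in\RemSing{\phi}{\cur{D}}$ I extract $x\in\Irreg{\phi}\cap\AppBound{\phi}{\cur{D}}\cap\SRem{\phi}{\cur{D}}$. I then fix an arbitrary $(\psi,U)$ with $(\psi,U)\covers(\phi,\{x\})$; implicit in the statement is that this $(\psi,U)$ actually witnesses the removability, i.e., $U\subset\NSing{\psi}{\cur{D}}$ (without this tacit restriction each of the three conclusions fails in the degenerate case $(\psi,U)=(\phi,\{x\})$, since $x\in\Sing{\phi}{\cur{D}}$). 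The proof then proceeds item by item, and each item is driven by one part of the unpacked hypothesis.

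For item \eqref{lem.wierdbehaviour.1}, I would invoke $x\in\IEss{\phi}$, which by Definition \ref{def.reminfpoints} rules out $U\subset\Reg{\psi}$ and so produces a point $p\in U\cap\Irreg{\psi}$. Proposition \ref{Prop.Altdefinition} gives $\NSing{\psi}{\cur{D}}=\Reg{\psi}\cup\AppUnbound{\psi}{\cur{D}}\cup\Unapp{\psi}{\cur{D}}$; combined with $U\subset\NSing{\psi}{\cur{D}}$, irregularity of $p$ forces $p\in\AppUnbound{\psi}{\cur{D}}\cup\Unapp{\psi}{\cur{D}}$, as needed. For item \eqref{lem.wierdbehaviour.2}, I take $\gamma\in\cur{D}$ bounded with $x\in\overline{\phi\circ\gamma}$ and $q\in U\cap\overline{\psi\circ\gamma}$; by definition of $\AppBound{\psi}{\cur{D}}$ we have $q\in\AppBound{\psi}{\cur{D}}$, so if $q\in\Irreg{\psi}$ then $q\in\Sing{\psi}{\cur{D}}$ by Proposition \ref{Prop.Altdefinition}, contradicting $U\subset\NSing{\psi}{\cur{D}}$.

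Item \eqref{lem.wierdbehaviour.3} reduces to item \eqref{lem.wierdbehaviour.2}: I would use $x\in\AppBound{\phi}{\cur{D}}$ to choose a bounded $\gamma\in\cur{D}$ with $x\in\overline{\phi\circ\gamma}$, and a sequence $\{t_i\}$ in the domain of $\gamma$ with $\phi\circ\gamma(t_i)\to x$; the relation $(\psi,U)\covers(\phi,\{x\})$ then yields a limit point $q\in U$ of $\{\psi\circ\gamma(t_i)\}$, giving $q\in U\cap\overline{\psi\circ\gamma}$, and item \eqref{lem.wierdbehaviour.2} forces $q\in\Reg{\psi}$.

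No serious obstacle arises. Once the alternate descriptions from Section \ref{altdef} are available the argument is essentially a direct set-chase driven by the identity $\NSing{\psi}{\cur{D}}=\Reg{\psi}\cup\AppUnbound{\psi}{\cur{D}}\cup\Unapp{\psi}{\cur{D}}$. The only real care is in interpreting the quantifier on $(\psi,U)$ and in manufacturing the sequence along $\gamma$ in item \eqref{lem.wierdbehaviour.3} so that the $\covers$-hypothesis can be applied to reduce the question to the irregular-vs-regular dichotomy already handled in item \eqref{lem.wierdbehaviour.2}.
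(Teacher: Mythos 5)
Your proof is correct and follows essentially the same route as the paper's: item 1 comes from essentiality ruling out $U\subset\Reg{\psi}$ combined with $\NSing{\psi}{\cur{D}}=\Reg{\psi}\cup\AppUnbound{\psi}{\cur{D}}\cup\Unapp{\psi}{\cur{D}}$, and items 2--3 come from pushing a bounded curve through the covering relation (the paper cites Theorem 17 of Scott--Szekeres for the limit-point step you re-derive directly from the definition of $\covers$). Your observation that the quantifier over $(\psi,U)$ must implicitly be restricted to witnesses with $U\subset\NSing{\psi}{\cur{D}}$ is well taken; the paper's proof likewise only ever works with such a witness.
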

		\begin{proof}
			Since
			$x\in\RemSing{\phi}{\cur{D}}=\Irreg{\phi}\cap\AppBound{\phi}{\cur{D}}\cap\SRem{\phi}{\cur{D}}$
			we know that there exists $(\psi, U)\in\Bound{M}$ so that
			$(\psi,U)\covers (\phi,\{x\})$ and $U\subset\NSing{\psi}{\cur{D}}$.
			If $U\subset\Reg{\psi}$ then $x\in\NSing{\psi}{\cur{C}}$. This is a contradiction and therefore
			there exists $p\in U$ so that $p\in\Irreg{\psi}$. As 
			$p\in\NSing{\psi}{\cur{D}}$ we know that
			either $p\in\AppUnbound{\psi}{\cur{D}}$ or $p\in\Unapp{\psi}{\cur{D}}$. This proves item
			\eqref{lem.wierdbehaviour.1}.

			Since $x\in\AppBound{\phi}{\cur{D}}$ there exists $\gamma\in \cur D$,
			bounded, so that $x\in\overline{\phi\circ\gamma}$. By Theorem 17
			of \cite{ScottSzekeres1994} there exists $q\in U$ so that
			$q\in\overline{\psi\circ\gamma}$. If $q\in\Irreg{\psi}$
			then $q\in\Sing{\psi}{\cur{D}}$. This is a contradiction
			and therefore $q\in\Reg{\psi}$. This proves items \eqref{lem.wierdbehaviour.2}
			and \eqref{lem.wierdbehaviour.3}.
		\end{proof}

		Thus in order for a boundary point $x\in\partial\phi(M)$, as given in 
		the statement of Lemma \ref{lem.wierdbehaviour},
		to exist, we know that, using the parametrization of curves given by $\cur{D}$, there exists at least one
		\begin{enumerate}
			\item bounded curve, converging to $x$, along which
				the metric behaves regularly. That is, the curves affine parameter
				is bounded, the Kretschmann scalar has a well defined 
				limit, etc...
			\item unbounded curve, converging to $x$,
		\end{enumerate}
		and the set of curves $\cur{C}$ cannot contain 
		any of the curves of the first type and
		must contain at least one curve of the second type
		while the set of curves $\cur{D}$ must contain curves of both types.
		It is tempting therefore to attribute the existence of such points, in the
		classifications of $\cur{C}$ and $\cur{D}$,
		to a surfeit of curves in the set $\cur{C}$. 
		An example of this behaviour can be constructed from the
		directional singularity of the Curzon solution as it hides a portion
		of spacelike infinity, \cite{ScottSzekeres1986b,ScottSzekeres1986a}.
		
		\subsection{Union}\label{Union subsection}
		
			We now give a similar analysis for the union of $\cur{C},\cur{D}\in\BPPSet$.
			Throughout this section we assume that $\cur{C}\cup_{\textnormal{b.p.p.}}\cur{D}$ is 
			well defined.
			
			We will not go into as much detail as Section \ref{subsetsClass} for two reasons.
			First, the need to determine how the classification changes when taking the
			union of two b.p.p.\ satisfying sets is much less than that of adding additional
			curves to a b.p.p.\ satisfying set. Second, the first section provides an
			adequate example of how to determine additional detail if required.
		
			\begin{Prop}\label{prop.UniChanges}
			  Let $\cur{C},\cur{D}\in\BPPSet$ and let
			  $\cur{E}=\cur{C}\cup_{\textnormal{b.p.p.}}\cur{D}$ then, for all $\phi\in\Env$,
			  \begin{enumerate}
			    \item $\App{\phi}{\cur{C}}\cup\App{\phi}{\cur{D}}=\App{\phi}{\cur{E}}$,\label{prop.unichanges.1}
			    \item $\AppBound{\phi}{\cur{C}}\cup\AppBound{\phi}{\cur{D}}=\AppBound{\phi}{\cur{E}}$,\label{prop.unichanges.2}
			    \item $\AppUnbound{\phi}{\cur{E}}=\Bigl(\AppUnbound{\phi}{\cur{C}}-\AppBound{\phi}{\cur{D}}\Bigr)\cup\Bigl(\AppUnbound{\phi}{\cur{D}}-\AppBound{\phi}{\cur{C}}\Bigr)$,\label{prop.unichanges.3}
			    \item $\Unapp{\phi}{\cur{E}}=\Unapp{\phi}{\cur{C}}\cap\Unapp{\phi}{\cur{D}}$,\label{prop.unichanges.4}
			    \item $\NSing{\phi}{\cur{E}}=\NSing{\phi}{\cur{C}}\cap\NSing{\phi}{\cur{D}}$,\label{prop.unichanges.5}
			    \item $\SRem{\phi}{\cur{E}}\subset\SRem{\phi}{\cur{C}}\cap\SRem{\phi}{\cur{D}}$,\label{prop.unichanges.6}
					\item If $x\in\bigl(\SRem{\phi}{\cur{C}}\cap\SRem{\phi}{\cur{D}}\bigr)- \SRem{\phi}{\cur{E}}$
						then for 
						for all $(\psi,U)\in\Bound{M}$ so that $(\psi,U)\covers (\phi,\{x\})$ and
						$U\subset\NSing{\psi}{\cur{C}}$ ($U\subset\NSing{\psi}{\cur{D}}$) there exists $y\in U$ so that
						$y\in\Irreg{\psi}$ and either $y\in\AppUnbound{\psi}{\cur{C}}\cap\AppBound{\psi}{\cur{D}}$
						($y\in\AppUnbound{\psi}{\cur{D}}\cap\AppBound{\psi}{\cur{C}}$)
						or $y\in\Unapp{\psi}{\cur{C}}\cap\App{\psi}{\cur{D}}$ ($y\in\Unapp{\psi}{\cur{D}}\cap\App{\psi}{\cur{C}}$),\label{prop.unichanges.7}
					\item $\SEss{\phi}{\cur{C}}\cup\SEss{\phi}{\cur{D}}\subset\SEss{\phi}{\cur{E}}$,\label{prop.unichanges.8}
					\item $\SEss{\phi}{\cur{E}}-\bigl(\SEss{\phi}{\cur{C}}\cup\SEss{\phi}{\cur{D}}\bigr)=\bigl(\SRem{\phi}{\cur{C}}\cap\SRem{\phi}{\cur{D}}\bigr)- \SRem{\phi}{\cur{E}}$.\label{prop.unichanges.9}
			  \end{enumerate}
			\end{Prop}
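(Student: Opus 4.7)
The plan is to work through the nine items in order, exploiting the fact that almost all of them reduce either to items \eqref{prop.unichanges.1} and \eqref{prop.unichanges.2} or to Proposition \ref{Prop.Altdefinition}, leaving item \eqref{prop.unichanges.7} as the only piece that needs real work.

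For items \eqref{prop.unichanges.1} and \eqref{prop.unichanges.2}, the inclusion $\supset$ is simply the general union inclusion recorded at the start of Section \ref{sec.boundarychanges} applied to $\cur{C},\cur{D}\subset_{\textnormal{b.p.p.}}\cur{E}$. For the reverse inclusion I would unpack the definition of $\cur{E}=\cur{C}\cup_{\textnormal{b.p.p.}}\cur{D}$: every curve of $\cur{E}$ agrees (up to reparametrization preserving boundedness --- which is precisely what ``$\cur{C}\cup_{\textnormal{b.p.p.}}\cur{D}$ exists'' buys us) with a curve of $\cur{C}$ or $\cur{D}$, so a curve of $\cur{E}$ accumulating at $p$ (bounded, in the case of item \eqref{prop.unichanges.2}) yields a curve of $\cur{C}$ or $\cur{D}$ accumulating at $p$ with the same bounded/unbounded character.

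From here the remaining items except \eqref{prop.unichanges.7} fall out in sequence. Item \eqref{prop.unichanges.3} comes from writing $\AppUnbound{\phi}{\cur{E}}=\App{\phi}{\cur{E}}-\AppBound{\phi}{\cur{E}}$ and substituting items \eqref{prop.unichanges.1} and \eqref{prop.unichanges.2}; item \eqref{prop.unichanges.4} is the complement of item \eqref{prop.unichanges.1}. For item \eqref{prop.unichanges.5}, Proposition \ref{Prop.Altdefinition} together with item \eqref{prop.unichanges.2} gives $\Sing{\phi}{\cur{E}}=\Irreg{\phi}\cap\AppBound{\phi}{\cur{E}}=\Sing{\phi}{\cur{C}}\cup\Sing{\phi}{\cur{D}}$, and complementing yields the claim. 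Item \eqref{prop.unichanges.6} is immediate: a witness $(\psi,U)$ for $x\in\SRem{\phi}{\cur{E}}$ has $U\subset\NSing{\psi}{\cur{E}}\subset\NSing{\psi}{\cur{C}}\cap\NSing{\psi}{\cur{D}}$ by item \eqref{prop.unichanges.5}, so the same $(\psi,U)$ witnesses membership in both $\SRem{\phi}{\cur{C}}$ and $\SRem{\phi}{\cur{D}}$. Item \eqref{prop.unichanges.8} is the contrapositive of item \eqref{prop.unichanges.6}, and item \eqref{prop.unichanges.9} is De Morgan applied to the definition of $\SEss{\phi}{\cdot}$.

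The main obstacle is item \eqref{prop.unichanges.7}, which is the only place that exploits the asymmetric hypothesis $U\subset\NSing{\psi}{\cur{C}}$. The plan there is as follows: assume $x\in(\SRem{\phi}{\cur{C}}\cap\SRem{\phi}{\cur{D}})-\SRem{\phi}{\cur{E}}$ and fix any $(\psi,U)\covers(\phi,\{x\})$ with $U\subset\NSing{\psi}{\cur{C}}$ (the other case is symmetric). Because $x\notin\SRem{\phi}{\cur{E}}$, this same $(\psi,U)$ cannot witness $x$ as removable under $\cur{E}$, so $U\not\subset\NSing{\psi}{\cur{E}}$. Combining item \eqref{prop.unichanges.5} with item \eqref{prop.unichanges.2} gives $\Sing{\psi}{\cur{E}}=\Sing{\psi}{\cur{C}}\cup\Sing{\psi}{\cur{D}}$, so some $y\in U$ lies in this union; $U\subset\NSing{\psi}{\cur{C}}$ rules out the $\cur{C}$ term, forcing $y\in\Sing{\psi}{\cur{D}}=\Irreg{\psi}\cap\AppBound{\psi}{\cur{D}}$. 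Finally, $y\in U\subset\NSing{\psi}{\cur{C}}=\Reg{\psi}\cup\AppUnbound{\psi}{\cur{C}}\cup\Unapp{\psi}{\cur{C}}$ combined with $y\in\Irreg{\psi}$ leaves $y\in\AppUnbound{\psi}{\cur{C}}\cup\Unapp{\psi}{\cur{C}}$, and a two-way case split on which alternative holds produces the two alternatives in the statement. The only subtle bookkeeping will be keeping straight that the $(\psi,U)$ furnished by the hypothesis is the same one used to produce the witness $y$ in the conclusion.
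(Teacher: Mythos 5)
Your proposal is correct and follows essentially the same route as the paper: items \eqref{prop.unichanges.1} and \eqref{prop.unichanges.2} from the construction of $\cur{E}$, the middle items by complementation and Proposition \ref{Prop.Altdefinition}, and for item \eqref{prop.unichanges.7} the same witness argument (the paper extracts $q\in U$ with $q\in\Irreg{\psi}$, $q\notin\AppUnbound{\psi}{\cur{E}}\cup\Unapp{\psi}{\cur{E}}$ and runs the identical two-way case split via items \eqref{prop.unichanges.3} and \eqref{prop.unichanges.4}, which is just a rephrasing of your use of $\Sing{\psi}{\cur{E}}=\Sing{\psi}{\cur{C}}\cup\Sing{\psi}{\cur{D}}$). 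No gaps.
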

			\begin{proof}
				Items \eqref{prop.unichanges.1}, \eqref{prop.unichanges.2}, follow from the construction of $\cur{E}$, see Definition 31 of
				\cite{Whale2012a}. Item \eqref{prop.unichanges.3} follows from items \eqref{prop.unichanges.1} and \eqref{prop.unichanges.2},
			  the equation $(A\cup B)-C=(A-C)\cup(B-C)$ and as $B\subset A$ implies that $A-(B\cup C)=A-B-C$. Item \eqref{prop.unichanges.4}
			  follows from item \eqref{prop.unichanges.1} and the definition.
				Item \eqref{prop.unichanges.5} follows from item \eqref{prop.unichanges.2}, Definition \ref{def.singpoints} and the
				equation $A-(B\cup C)=(A-B)\cap(A-C)$.
			  Item \eqref{prop.unichanges.6} follows directly from item \eqref{prop.unichanges.5} and Definition \ref{def.srem-sess}.
			  
			  We now prove Item \eqref{prop.unichanges.7}. Let $p\in\bigl(\SRem{\phi}{\cur{C}}\cap\SRem{\phi}{\cur{D}}\bigr)- \SRem{\phi}{\cur{E}}$
			  then for all $(\psi,U)\in\Bound{M}$ so that $(\psi,U)\covers (\phi,\{p\})$
			  and $U\subset \NSing{\phi}{\cur{C}}$ we know that $U\not\subset\NSing{\psi}{\cur{E}}$.
			  Thus there exists $q\in U$ so that $q\in\Irreg{\psi}$,
			  $q\not\in\AppUnbound{\psi}{\cur{E}}$
			  and $q\not\in\Unapp{\psi}{\cur{E}}$.
			  Since $q\in\NSing{\psi}{\cur{C}}$ we know that
			  $q\in\AppUnbound{\psi}{\cur{C}}\cup\Unapp{\psi}{\cur{C}}$.
			  If $q\in\AppUnbound{\psi}{\cur{C}}$ then $q\in\AppBound{\psi}{\cur{D}}$ (by item \eqref{prop.unichanges.3} and assumption),
			  otherwise $q\in\NSing{\psi}{\cur{E}}$.
			  If $q\in\Unapp{\psi}{\cur{C}}$ then $q\in\App{\psi}{\cur{D}}$ (by item \eqref{prop.unichanges.4} and assumption),
			  otherwise $q\in\NSing{\psi}{\cur{E}}$. This is sufficient to 
			  prove item \eqref{prop.unichanges.7}.
			  
			  Items \eqref{prop.unichanges.8} and \eqref{prop.unichanges.9} follow from item \eqref{prop.unichanges.6} and the relevant 
			  definitions.
			\end{proof}

		Figures \ref{figure appinfunion} and \ref{figure remsingunion} give a graphical representation of Proposition
		\ref{prop.UniChanges}. We skip the equivalent of Corollary \ref{cor.bclasschange}, the missing details
		can be determined from Proposition \ref{prop.UniChanges}.
		

		\begin{figure*}
		  \centering
		  \def\svgwidth{\columnwidth}
		  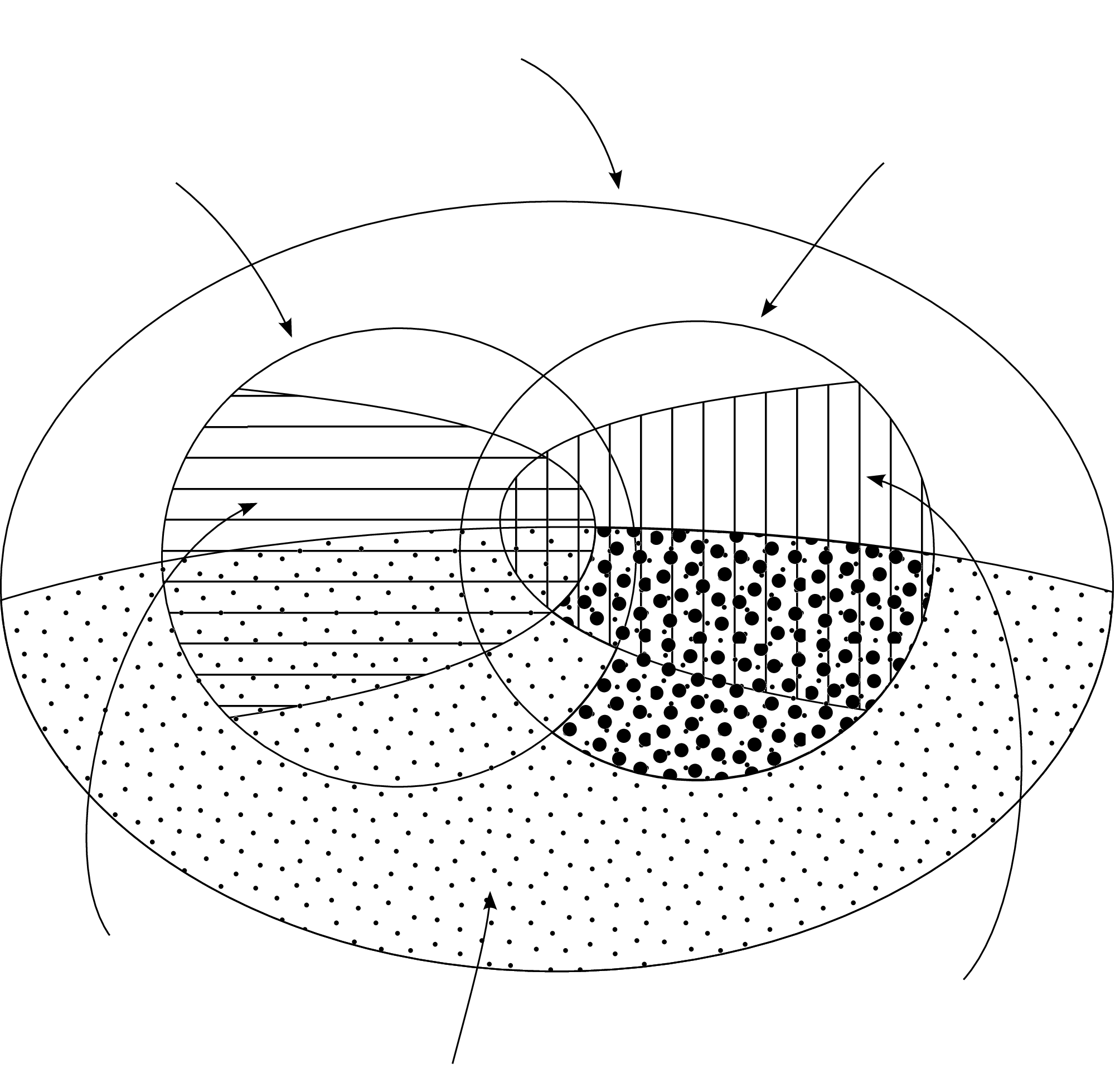
		  \caption{A venn diagram of 
		    $\partial\phi(\man{M})$, $\Reg{\phi}$, $\Irreg{\phi}$,
		    $\App{\phi}{\cur{C}}$, $\AppBound{\phi}{\cur{C}}$,
		    $\AppUnbound{\phi}{\cur{C}}$, $\App{\phi}{\cur{D}}$,
		    $\AppBound{\phi}{\cur{D}}$, $\AppUnbound{\phi}{\cur{D}}$,
		    $\App{\phi}{\cur{C}\cup_{\textnormal{b.p.p.}}\cur{D}}$,
		    $\AppBound{\phi}{\cur{C}\cup_{\textnormal{b.p.p.}}\cur{D}}$ and 
		    $\AppUnbound{\phi}{\cur{C}\cup_{\textnormal{b.p.p.}}\cur{D}}$.
		    The largest oval is $\partial\phi(\man{M})$. The dotted region,
		    both small and large dots, is $\Irreg{\phi}$.
		    The left circle is $\App{\phi}{\cur{C}}$, 
		    the right circle is $\App{\phi}{\cur{D}}$. 
		    The horizontally ruled region is $\AppBound{\phi}{\cur{C}}$, 
		    the region of the left circle not ruled by horizontal
		    lines is $\AppUnbound{\phi}{\cur{C}}$. 
		    Likewise, the vertically ruled region is $\AppBound{\phi}{\cur{D}}$
		    and the region of the right circle not ruled by vertical lines is 
		    $\AppUnbound{\phi}{\cur{D}}$. 
		    The union of both circles is $\App{\phi}{\cur{C}\cup_{\textnormal{b.p.p.}}\cur{D}}$.
		    The region that is ruled, either by horizontal lines or vertical lines 
		    or both is $\AppBound{\phi}{\cur{C}\cup_{\textnormal{b.p.p.}}\cur{D}}$. The region 
		    inside the union of the circles 
		    that is not ruled is $\AppUnbound{\phi}{\cur{C}\cup_{\textnormal{b.p.p.}}\cur{D}}$.
		    The region covered by large dots is 
		    $\Irreg{\phi}\cap\Unapp{\phi}{\cur{C}}
		    \cap\App{\phi}{\cur{D}}$ union $\Irreg{\phi}\cap
		    \AppUnbound{\phi}{\cur{C}}\cap\AppBound{\phi}{\cur{D}}$, see
		    item \eqref{prop.unichanges.7} of Proposition 
		    \ref{prop.UniChanges}. Note that the sets mentioned in brackets in
		    item \eqref{prop.unichanges.7} of Proposition 
		    \ref{prop.UniChanges} can be found by interchanging $\cur{C}$ and $\cur{D}$
		    in the labels of this diagram.}\label{figure appinfunion}
		\end{figure*}
		
    \begin{figure*}
		  \centering
		  \def\svgwidth{\columnwidth}
		  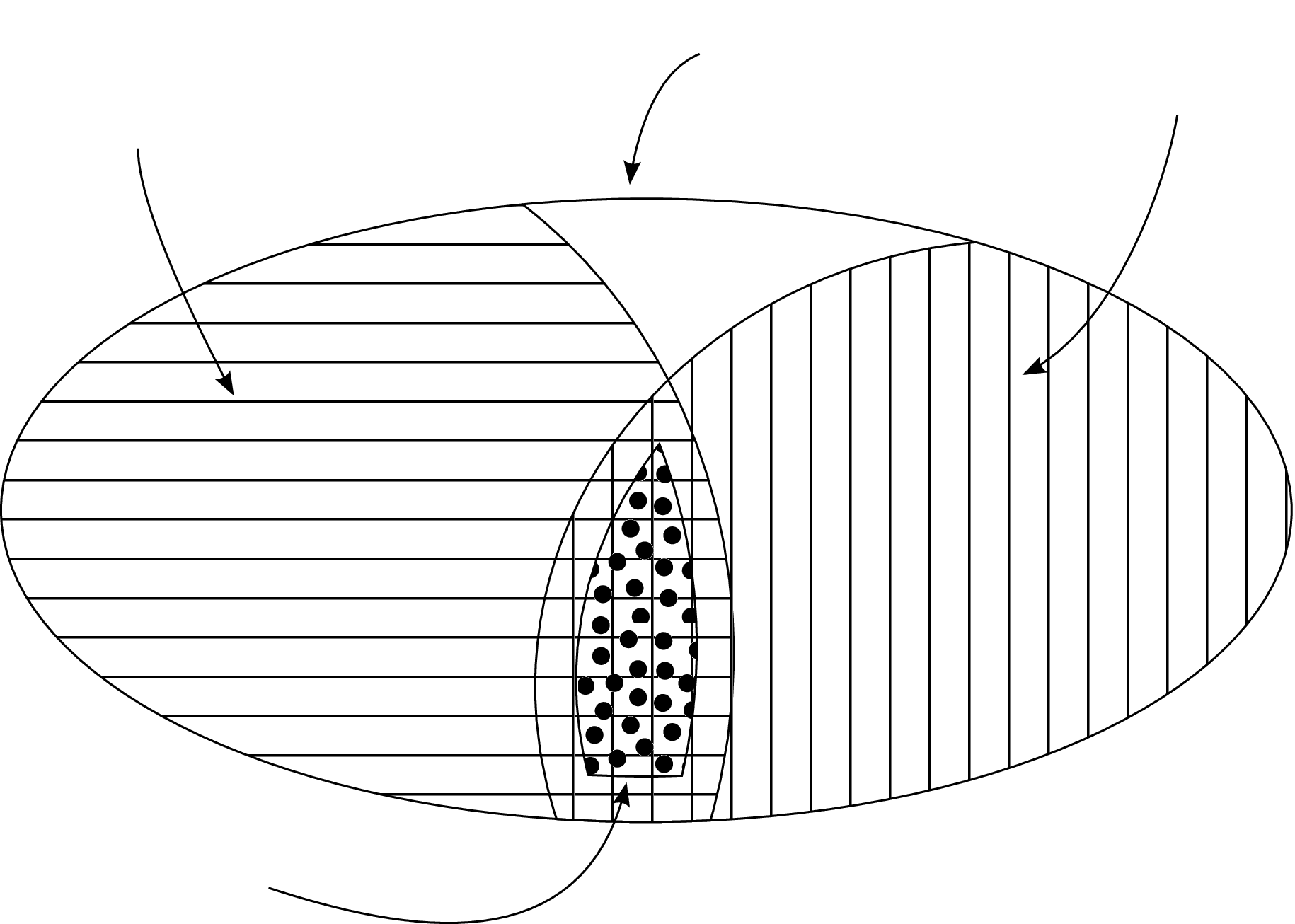
		  \caption{A venn diagram of 
		    $\partial\phi(\man{M})$, $\SRem{\phi}{\cur{C}}$, $\SEss{\phi}{\cur{C}}, 
		    \SRem{\phi}{\cur{D}}$, $\SEss{\phi}{\cur{D}}$, $\SRem{\phi}{\cur{C}\cup_{\textnormal{b.p.p.}}\cur{D}}$
		    and $\SEss{\phi}{\cur{C}\cup_{\textnormal{b.p.p.}}\cur{D}}$. The oval is
		    $\partial\phi(\man{M})$. The horizontally ruled region is $\SRem{\phi}{\cur{C}}$,
		    the region not ruled by horizontal lines is $\SEss{\phi}{\cur{C}}$.
		    Likewise, the vertically ruled region is $\SRem{\phi}{\cur{D}}$ and
		    the region not ruled by vertical lines is $\SEss{\phi}{\cur{D}}$.
		    The dotted region is $\SRem{\phi}{\cur{C}\cup_{\textnormal{b.p.p.}}\cur{D}}$ and
		    the region that is not dotted is $\SEss{\phi}{\cur{C}\cup_{\textnormal{b.p.p.}}\cur{D}}$.
		    The region that is ruled by both horizontal and vertical lines, but
		    is not dotted consists of those boundary points that are described
		    by item \eqref{prop.unichanges.7} of Proposition \ref{prop.UniChanges},
		    see Figure \ref{figure appinfunion}.}\label{figure remsingunion}
		\end{figure*}
		

		\begin{Thm}\label{thm.union}
		  Let $\cur{C},\cur{D}\in\BPPSet$ and let
		  $\cur{E}=\cur{C}\cup_{\textnormal{b.p.p.}}\cur{D}$. For all $\phi\in\Env$ we have that,
			\begin{enumerate}
			  \item If $x\in\App{\phi}{\cur{C}}$ then $x\in\App{\phi}{\cur{E}}$
				\item If $x\in\Unapp{\phi}{\cur{C}}$ then
					\begin{enumerate}
					  \item $x\in\App{\phi}{\cur{D}}$ implies $x\in\App{\phi}{\cur{E}}$
					  \item $x\in\Unapp{\phi}{\cur{D}}$ implies $x\in\Unapp{\phi}{\cur{E}}$
					\end{enumerate}
				\item If $x\in\Inf{\phi}{\cur{C}}$ then
					\begin{enumerate}
					  \item $x\not\in\AppBound{\phi}{\cur{D}}$ implies $x\in\Inf{\phi}{\cur{E}}$
					  \item $x\in\AppBound{\phi}{\cur{D}}$ implies $x\in\Sing{\phi}{\cur{E}}$
					\end{enumerate}
				\item If $x\in\RemInf{\phi}{\cur{C}}$ then
					\begin{enumerate}
					  \item $x\not\in\AppBound{\phi}{\cur{D}}$ implies $x\in\RemInf{\phi}{\cur{E}}$
					  \item $x\in\AppBound{\phi}{\cur{D}}$ implies $x\in\RemSing{\phi}{\cur{E}}$
					\end{enumerate}
				\item If $x\in\EssInf{\phi}{\cur{C}}$ then
					\begin{enumerate}
					  \item $x\not\in\AppBound{\phi}{\cur{D}}$ implies $x\in\EssInf{\phi}{\cur{E}}$
					  \item $x\in\AppBound{\phi}{\cur{D}}$ implies $x\in\Sing{\phi}{\cur{E}}$ and
							\begin{enumerate}
							  \item $x\in\SRem{\phi}{\cur{E}}$ implies $x\in\RemSing{\phi}{\cur{E}}$
							  \item $x\not\in\SRem{\phi}{\cur{E}}$ implies $x\in\EssSing{\phi}{\cur{E}}$
							\end{enumerate}
					\end{enumerate}
				\item If $x\in\MixInf{\phi}{\cur{C}}$ then
					\begin{enumerate}
					  \item $x\not\in\AppBound{\phi}{\cur{D}}$ implies $x\in\MixInf{\phi}{\cur{E}}$
					  \item $x\in\AppBound{\phi}{\cur{D}}$ implies $x\in\Sing{\phi}{\cur{E}}$ and
							\begin{enumerate}
							  \item $x\in\SRem{\phi}{\cur{E}}$ implies $x\in\RemSing{\phi}{\cur{E}}$
							  \item $x\not\in\SRem{\phi}{\cur{E}}$ implies $x\in\MixSing{\phi}{\cur{E}}$
							\end{enumerate}
					\end{enumerate}
				\item If $x\in\PureInf{\phi}{\cur{C}}$ then
					\begin{enumerate}
					  \item $x\not\in\AppBound{\phi}{\cur{D}}$ implies $x\in\PureInf{\phi}{\cur{E}}$
					  \item $x\in\AppBound{\phi}{\cur{D}}$ implies $x\in\Sing{\phi}{\cur{E}}$ and
							\begin{enumerate}
							  \item $x\in\SRem{\phi}{\cur{E}}$ implies $x\in\RemSing{\phi}{\cur{E}}$
							  \item $x\not\in\SRem{\phi}{\cur{E}}$ implies $x\in\PureSing{\phi}{\cur{E}}$
							\end{enumerate}
					\end{enumerate}
				\item If $x\in\Sing{\phi}{\cur{C}}$ then $x\in\Sing{\phi}{\cur{E}}$
				\item if $x\in\NSing{\phi}{\cur{C}}$ then
					\begin{enumerate}
					  \item $x\in\Reg{\phi}$ implies $x\in\NSing{\phi}{\cur{E}}$
					  \item $x\not\in\Reg{\phi}$ and
							\begin{enumerate}
							  \item $x\in\AppUnbound{\phi}{\cur{D}}$ implies $x\in\NSing{\phi}{\cur{E}}$
							  \item $x\not\in\AppUnbound{\phi}{\cur{D}}$ implies $x\in\Sing{\phi}{\cur{E}}$
							\end{enumerate}
					\end{enumerate}
					\item If $x\in\RemSing{\phi}{\cur{C}}$ then
						\begin{enumerate}
						  \item $x\in\SRem{\phi}{\cur{E}}$ implies $x\in\RemSing{\phi}{\cur{E}}$
						  \item $x\not\in\SRem{\phi}{\cur{E}}$ implies $x\in\EssSing{\phi}{\cur{E}}$
						\end{enumerate}
					\item If $x\in\EssSing{\phi}{\cur{C}}$ then $x\in\EssSing{\phi}{\cur{E}}$
					\item If $x\in\MixSing{\phi}{\cur{C}}$ then $x\in\MixSing{\phi}{\cur{E}}$
					\item If $x\in\PureSing{\phi}{\cur{C}}$ them $x\in\PureSing{\phi}{\cur{E}}$.
			\end{enumerate}
			Membership of $\SRem{\phi}{\cur{E}}$ can be checked using item \eqref{prop.unichanges.7}
			of Proposition \ref{prop.UniChanges}.  The same statements hold when interchanging
			$\cur{C}$ and $\cur{D}$.
		\end{Thm}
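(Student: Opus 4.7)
The plan is to follow the same template as the proof of Theorem \ref{boundaryPointChanges}, using Proposition \ref{Prop.Altdefinition} to reduce each of the thirteen statements to bookkeeping on the primitive sets, and Proposition \ref{prop.UniChanges} to track how those primitive sets behave under $\cup_{\textnormal{b.p.p.}}$. Throughout I would exploit that $\Reg{\phi}$, $\Irreg{\phi}$, $\IRem{\phi}$, $\IEss{\phi}$, $\Mix{\phi}$ and $\Pure{\phi}$ are independent of the set of curves, so only the dependence on $\App{\phi}{\cdot}$, $\AppBound{\phi}{\cdot}$, $\AppUnbound{\phi}{\cdot}$, $\NSing{\phi}{\cdot}$ and $\SRem{\phi}{\cdot}$ needs to be chased. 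The symmetry of the construction of $\cur{E}$ in its arguments will then yield the closing sentence for free.

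The approachability statements (1) and (2) follow immediately from items \eqref{prop.unichanges.1} and \eqref{prop.unichanges.4} of Proposition \ref{prop.UniChanges}. For the points-at-infinity statements (3)--(7), I would start from $x\in\Inf{\phi}{\cur{C}}=\Irreg{\phi}\cap\AppUnbound{\phi}{\cur{C}}$ and apply item \eqref{prop.unichanges.3}: $x\in\AppUnbound{\phi}{\cur{E}}$ precisely when $x\notin\AppBound{\phi}{\cur{D}}$, in which case $x\in\Inf{\phi}{\cur{E}}$; otherwise $x\in\AppBound{\phi}{\cur{E}}$ and so $x\in\Sing{\phi}{\cur{E}}$. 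The refinements to removable, essential, mixed and pure are then obtained by intersecting with $\IRem{\phi}$, $\IEss{\phi}$, $\Mix{\phi}$ or $\Pure{\phi}$ and reading off the result from Proposition \ref{Prop.Altdefinition}. In the case where a point at infinity becomes a singular point in $\cur{E}$, the inclusion $\IRem{\phi}\subset\SRem{\phi}{\cur{E}}$ (which holds for every b.p.p.\ satisfying set) guarantees that a point originally in $\RemInf{\phi}{\cur{C}}$ lands in $\RemSing{\phi}{\cur{E}}$; for points originally essential, mixed or pure, membership in $\RemSing{\phi}{\cur{E}}$ versus $\EssSing{\phi}{\cur{E}}$ is governed by whether $x\in\SRem{\phi}{\cur{E}}$, and is to be checked by item \eqref{prop.unichanges.7}.

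The singularity statements (8)--(13) are handled dually. Item \eqref{prop.unichanges.2} gives $\Sing{\phi}{\cur{C}}\subset\Sing{\phi}{\cur{E}}$, which is (8); (9) falls out from combining items \eqref{prop.unichanges.2}, \eqref{prop.unichanges.3} and \eqref{prop.unichanges.4} together with the decomposition of $\NSing{\phi}{\cur{C}}$ from Proposition \ref{Prop.Altdefinition}. For (11)--(13) the key tool is item \eqref{prop.unichanges.8}, $\SEss{\phi}{\cur{C}}\subset\SEss{\phi}{\cur{E}}$, which together with invariance of $\Mix{\phi}$ and $\Pure{\phi}$ forces essential, mixed and pure singularities to persist. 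Statement (10) is where the only genuinely interesting case arises: a removable singularity with respect to $\cur{C}$ may fail to be removable with respect to $\cur{E}$ because $\SRem{\phi}{\cur{E}}$ can be a strict subset of $\SRem{\phi}{\cur{C}}\cap\SRem{\phi}{\cur{D}}$, precisely the phenomenon captured by item \eqref{prop.unichanges.7} and analogous to the behaviour exposed by Lemma \ref{lem.wierdbehaviour}.

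I expect the only obstacle to be bookkeeping: there are many conditional branches and one must verify that every case is covered and that no hidden subcase slips through when $x$ lies simultaneously in several primitive sets. No new ideas beyond those developed for Theorem \ref{boundaryPointChanges} are required; the final sentence about interchanging $\cur{C}$ and $\cur{D}$ follows from the manifest symmetry in the definition of $\cur{E}=\cur{C}\cup_{\textnormal{b.p.p.}}\cur{D}$.
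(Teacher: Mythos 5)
Your proposal is correct and follows exactly the route the paper takes: the paper's own proof of this theorem is the one-line observation that each item follows from Propositions \ref{Prop.Altdefinition} and \ref{prop.UniChanges}, and your write-up simply fills in the case-by-case bookkeeping (including the use of $\IRem{\phi}\subset\SRem{\phi}{\cur{E}}$ for the removable-point-at-infinity case and of item \eqref{prop.unichanges.7} for membership of $\SRem{\phi}{\cur{E}}$) in the way the author intends. No substantive difference in approach.
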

		\begin{proof}
		  Each item follows from Propositions \ref{Prop.Altdefinition} and \ref{prop.UniChanges}.
		\end{proof}

		We again see the surprising behaviour that essential points at infinity
		can become removable singularities. As before, this
		can only occur in a specific set of circumstances, namely those described in 
		item \eqref{prop.unichanges.7} of Proposition \ref{prop.UniChanges} and by Lemma 
		\ref{lem.wierdbehaviour} (where we consider $\cur{C}\subset_{\textnormal{b.p.p.}}\cur{E}$).
		
		An example of this, fitting the situation of
		item \eqref{prop.unichanges.7} of
		Proposition \ref{prop.UniChanges}, can be constructed using the
		directional singularity of the Curzon solution, \cite{ScottSzekeres1986b,ScottSzekeres1986a},
		where $\cur{C}$ contains curves classifying the directional singularity
		as a point of spacelike infinity and $\cur{D}$ contains curves
		classifying the directional singularity as a singularity.

\section{Changes to the abstract boundary classification as we change $\cur{C}$}\label{sec.abstratboundarychanges}

	Now that we know how the classification of boundary points changes
	we can determine how the classification of abstract boundary points changes.

	\subsection{Subset}
		
			\begin{Thm}\label{thm.abchangessubset}
				Let $\cur{C},\cur{D}\in\BPPSet$ so that $\cur{C}\subset_{\textnormal{b.p.p.}}\cur{D}$ then,
				\begin{enumerate}
					\item If $[(\phi,\{x\})]\in\ABApp{\cur{C}}$ then $[(\phi,\{x\})]\in\ABApp{\cur{D}}$
					\item If $[(\phi,\{x\})]\in\ABUnapp{\cur{C}}$ then either $[(\phi,\{x\})]\in\ABApp{\cur{D}}$ or $[(\phi,\{x\})]\in\ABUnapp{\cur{D}}$.
					\item If $[(\phi,\{x\})]\in\ABReg{\cur{C}}$ then 
						\begin{enumerate}
							\item $x\in\Reg{\phi}$ implies $[(\phi,\{x\})]\in\ABReg{\cur{D}}$
							\item $x\in\AppUnbound{\phi}{\cur{C}}$ implies $[(\phi,\{x\})]\in\ABReg{\cur{D}}$
							\item $x\in\AppBound{\phi}{\cur{C}}$ and
							\begin{enumerate}
							  \item $x\in\SRem{\phi}{\cur{D}}$ implies $[(\phi,\{x\})]\in\ABReg{\cur{D}}$
							  \item $x\not\in\SRem{\phi}{\cur{D}}$ implies $[(\phi,\{x\})]\in\ABSing{\cur{D}}$.
							\end{enumerate}
						\end{enumerate}
					\item If $[(\phi,\{x\})]\in\ABInf{\cur{C}}$ then 
						\begin{enumerate}
							\item $x\in\AppUnbound{\phi}{\cur{D}}$ implies $[(\phi,\{x\})]\in\ABInf{\cur{D}}$
							\item $x\not\in\AppUnbound{\phi}{\cur{D}}$ and 
							\begin{enumerate}
							  \item $x\in\SRem{\phi}{\cur{D}}$ implies $[(\phi,\{x\})]\in\ABReg{\cur{D}}$
							  \item $x\not\in\SRem{\phi}{\cur{D}}$ implies $[(\phi,\{x\})]\in\ABSing{\cur{D}}$.
							\end{enumerate}
						\end{enumerate}
					\item If $[(\phi,\{x\})]\in\ABMixInf{\cur{C}}$ then 
						\begin{enumerate}
							\item $x\in\AppUnbound{\phi}{\cur{D}}$ implies $[(\phi,\{x\})]\in\ABMixInf{\cur{D}}$
							\item $x\not\in\AppUnbound{\phi}{\cur{D}}$ and 
							\begin{enumerate}
							  \item $x\in\SRem{\phi}{\cur{D}}$ then $[(\phi,\{x\})]\in\ABReg{\cur{D}}$
							  \item $x\not\in\SRem{\phi}{\cur{D}}$ then $[(\phi,\{x\})]\in\ABMixSing{\cur{D}}$.
							\end{enumerate}
						\end{enumerate}
					\item If $[(\phi,\{x\})]\in\ABPureInf{\cur{C}}$ then 
						\begin{enumerate}
							\item $x\in\AppUnbound{\phi}{\cur{D}}$ implies $[(\phi,\{x\})]\in\ABPureInf{\cur{D}}$
							\item $x\not\in\AppUnbound{\phi}{\cur{D}}$ 
							\begin{enumerate}
							  \item $x\in\SRem{\phi}{\cur{D}}$ then $[(\phi,\{x\})]\in\ABReg{\cur{D}}$
							  \item $x\not\in\SRem{\phi}{\cur{D}}$ then $[(\phi,\{x\})]\in\ABPureSing{\cur{D}}$.
							\end{enumerate}
						\end{enumerate}
					\item If $[(\phi,\{x\})]\in\ABSing{\cur{C}}$ then $[(\phi,\{x\})]\in\ABSing{\cur{D}}$
					\item If $[(\phi,\{x\})]\in\ABMixSing{\cur{C}}$ then $[(\phi,\{x\})]\in\ABMixSing{\cur{D}}$
					\item If $[(\phi,\{x\})]\in\ABPureSing{\cur{C}}$ then $[(\phi,\{x\})]\in\ABPureSing{\cur{D}}$
				\end{enumerate}
			\end{Thm}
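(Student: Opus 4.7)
The plan is to reduce Theorem \ref{thm.abchangessubset} to Theorem \ref{boundaryPointChanges} by exploiting the fact that the abstract boundary classes of Definitions \ref{App&UnAppABpoints}--\ref{SingAbPoints} are defined precisely as those equivalence classes whose representatives $(\phi,\{x\})$ have $x$ in the corresponding boundary point class. These definitions are well posed (\cite{ScottSzekeres1994}), so to prove each implication about $[(\phi,\{x\})]$ it is enough to pick a single representative $(\phi,\{x\})$, read off the boundary class of $x$ with respect to $\cur{C}$, apply the appropriate case of Theorem \ref{boundaryPointChanges} to locate $x$ in the boundary classification with respect to $\cur{D}$, and translate the resulting boundary class back up to the abstract level.

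First I would fix a representative $(\phi,\{x\})$ and handle the two easy families. For the approachable/unapproachable split (items 1--2), Theorem \ref{boundaryPointChanges}(1)--(2) applies directly: approachability is preserved upwards and unapproachability can only either remain or be upgraded to approachable. For the $\Sing$, $\MixSing$, $\PureSing$ families (items 7--9), Theorem \ref{boundaryPointChanges}(8), (12), (13) shows these boundary classes are preserved under $\cur{C}\subset_{\textnormal{b.p.p.}}\cur{D}$, and the abstract versions then follow by Definition \ref{SingAbPoints}.

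The heart of the argument is the translation of the indeterminate/infinity cases, items 3--6. For these the critical observation is that $\ABReg{\cur{D}}$ (Definition \ref{IndABPOints}) absorbs three distinct boundary situations at $x$: $x\in\Reg{\phi}$, $x\in\RemInf{\phi}{\cur{D}}$, and $x\in\RemSing{\phi}{\cur{D}}$. So whenever Theorem \ref{boundaryPointChanges} delivers $x$ into any of these three boundary classes with respect to $\cur{D}$, the conclusion at the abstract level is $[(\phi,\{x\})]\in\ABReg{\cur{D}}$. For instance, starting from $[(\phi,\{x\})]\in\ABInf{\cur{C}}$ (item 4) one has $x\in\EssInf{\phi}{\cur{C}}$ and Theorem \ref{boundaryPointChanges}(5) branches on whether $x\in\AppUnbound{\phi}{\cur{D}}$, giving either $x\in\EssInf{\phi}{\cur{D}}$ (hence $\ABInf{\cur{D}}$) or $x\in\Sing{\phi}{\cur{D}}$ split further by membership of $\SRem{\phi}{\cur{D}}$ into $\RemSing{\phi}{\cur{D}}\subset\ABReg{\cur{D}}$ or $\EssSing{\phi}{\cur{D}}\subset\ABSing{\cur{D}}$. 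The mixed and pure cases (items 5--6) follow identically using Theorem \ref{boundaryPointChanges}(6)--(7), because $\Mix{\phi}$ and $\Pure{\phi}$ are independent of $\cur{C}$, so the mixed/pure character is inherited whenever the outcome remains in $\Inf$ or in an essential singular class. Item 3 is handled symmetrically by case-splitting on where $x$ already sits in $\Reg{\phi}\cup\AppUnbound{\phi}{\cur{C}}\cup\AppBound{\phi}{\cur{C}}$ (which covers $\ABReg{\cur{C}}$ by Definition \ref{IndABPOints} combined with Proposition \ref{Prop.Altdefinition}) and then invoking the relevant part of Theorem \ref{boundaryPointChanges}.

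The main (and only mildly delicate) obstacle is keeping the bookkeeping of Definition \ref{IndABPOints} straight: because $\ABReg{\cur{D}}$ collects three types of boundary points simultaneously, one must verify in every branch that the boundary outcome under $\cur{D}$ lands in one of $\Reg{\phi}$, $\RemInf{\phi}{\cur{D}}$, $\RemSing{\phi}{\cur{D}}$ before declaring the abstract boundary point indeterminate. Once this is tabulated case by case, every item of the theorem is an immediate rewriting of the corresponding item of Theorem \ref{boundaryPointChanges}, and no further topological or envelopment-level argument is needed.
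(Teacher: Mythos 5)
Your proposal is correct and is essentially the paper's own proof: the paper likewise reduces each item to Theorem \ref{boundaryPointChanges} via Definitions \ref{App&UnAppABpoints}, \ref{IndABPOints}, \ref{InfAbPoints} and \ref{SingAbPoints}, using well-definedness of the abstract classes to argue on a single representative $(\phi,\{x\})$ and then translate back up. The only ingredient the paper makes explicit that you leave implicit is the inclusion $\Reg{\phi},\Inf{\phi}{\cur{C}}\subset\SRem{\phi}{\cur{C}}$ (such points are covered by themselves), which underpins the bookkeeping you describe for the branches that must land in $\ABReg{\cur{D}}$.
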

			\begin{proof}
				This follows from Definitions \ref{App&UnAppABpoints}, \ref{IndABPOints}, \ref{InfAbPoints} and \ref{SingAbPoints} and from Theorem
				\ref{boundaryPointChanges}. Note that $\Reg{\phi},\Inf{\phi}{\cur{C}}\subset\SRem{\phi}{\cur{C}}$ since elements
			of $\Reg{\phi}$ and $\Inf{\phi}{\cur{C}}$ are covered by themselves.
			\end{proof}
			
			As before we give a graphical depiction of these results in Figure \ref{figure abstract boundar}. 
			The solid arrows give the usual structure of the abstract boundary classification, the 
			dashed arrows show how the classification may change when $\cur{C}$ is enlarged to $\cur{D}$, i.e. $\cur{C}\subset_{\textnormal{b.p.p.}}\cur{D}$.
			Note that
			we have not included arrows based at a class that point to the same class.
			\begin{figure*}
				\centering
				\def\svgwidth{\columnwidth}
				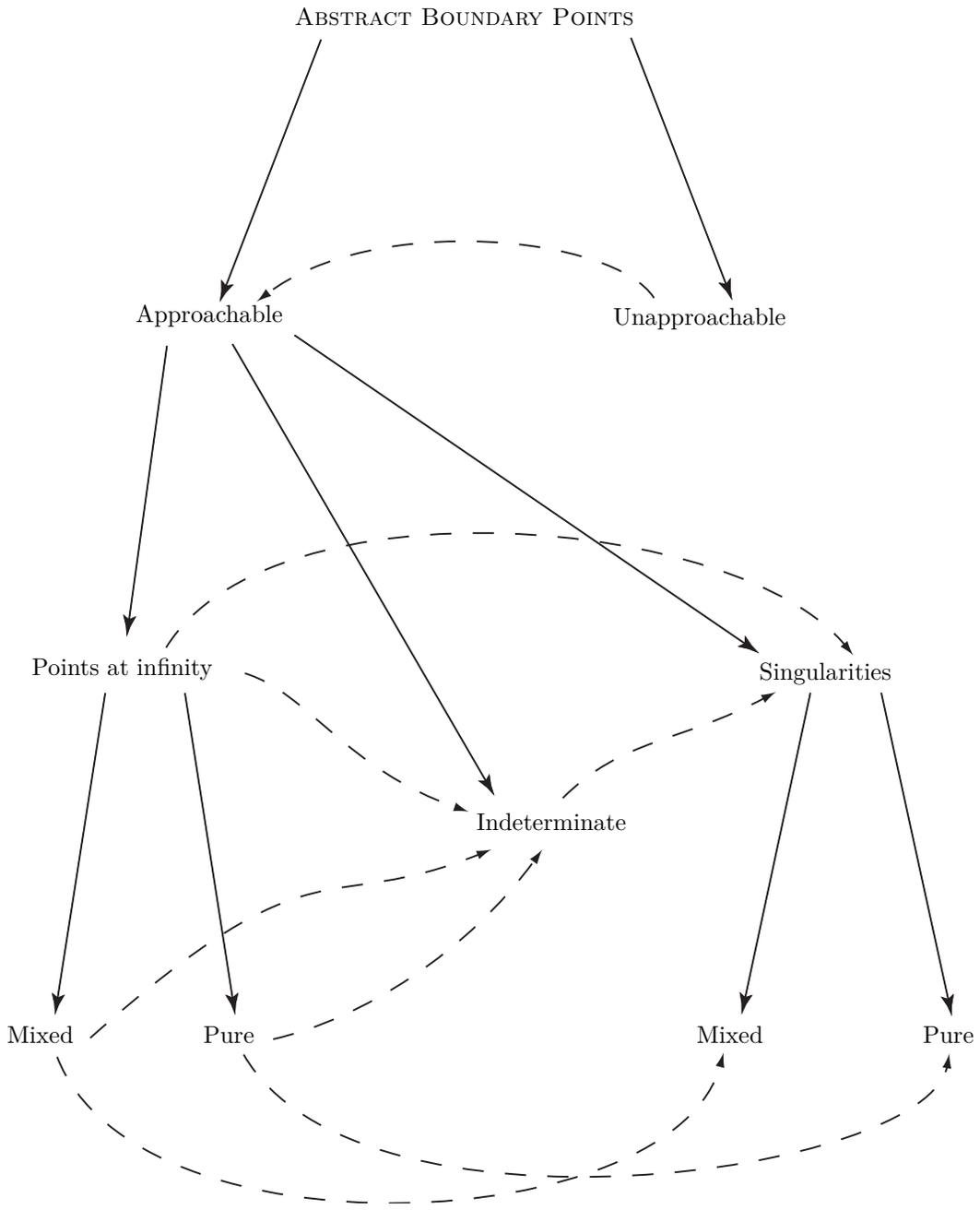 
				\caption{The changes to the abstract boundary classification when $\ensuremath{\cur{C}\subset_{\textnormal{b.p.p.}}\cur{D}}$. The solid arrows give the usual structure of the boundary classification, the 
		dashed arrows show how the classification may change between $\cur{C}$ and $\cur{D}$. Note that
		we have not included arrows based at a class that point to the same class.}\label{figure abstract boundar}
			\end{figure*}
			
		\subsubsection{Union}
		
			\begin{Thm}
				Let $\cur{C},\cur{D}\in\BPPSet$ so that $\cur{E}=\cur{C}\cup_{\textnormal{b.p.p.}}\cur{D}$ is well defined, then,
				\begin{enumerate}
					\item If $[(\phi,\{x\})]\in\ABApp{\cur{C}}$ then $[(\phi,\{x\})]\in\ABApp{\cur{E}}$.
					\item If $[(\phi,\{x\})]\in\ABUnapp{\cur{C}}$ then
						\begin{enumerate}
						  \item $[(\phi,\{x\})]\in\ABApp{\cur{D}}$ implies  $[(\phi,\{x\})]\in\ABApp{\cur{E}}$
						  \item $[(\phi,\{x\})]\in\ABUnapp{\cur{D}}$ implies  $[(\phi,\{x\})]\in\ABUnapp{\cur{E}}$.
						\end{enumerate}
					\item If $[(\phi,\{x\})]\in\ABReg{\cur{C}}$ then
						\begin{enumerate}
							\item $x\in\Reg{\phi}$ implies $[(\phi,\{x\})]\in\ABReg{\cur{E}}$
							\item $x\in\AppUnbound{\phi}{\cur{C}}$ implies $[(\phi,\{x\})]\in\ABReg{\cur{E}}$
							\item $x\in\AppBound{\phi}{\cur{C}}$ and
							\begin{enumerate}
							  \item $x\in\SRem{\phi}{\cur{E}}$ implies $[(\phi,\{x\})]\in\ABReg{\cur{E}}$
							  \item $x\not\in\SRem{\phi}{\cur{E}}$ implies $[(\phi,\{x\})]\in\ABSing{\cur{E}}$.
							\end{enumerate}
						\end{enumerate}
					\item If $[(\phi,\{x\})]\in\ABInf{\cur{C}}$ then 
						\begin{enumerate}
							\item $x\not\in\AppBound{\phi}{\cur{D}}$ implies $[(\phi,\{x\})]\in\ABInf{\cur{E}}$
							\item $x\in\AppBound{\phi}{\cur{D}}$ and 
							\begin{enumerate}
							  \item $x\in\SRem{\phi}{\cur{E}}$ implies $[(\phi,\{x\})]\in\ABReg{\cur{E}}$
							  \item $x\not\in\SRem{\phi}{\cur{E}}$ implies $[(\phi,\{x\})]\in\ABSing{\cur{E}}$.
							\end{enumerate}
						\end{enumerate}
					\item If $[(\phi,\{x\})]\in\ABMixInf{\cur{C}}$ then 
						\begin{enumerate}
							\item $x\not\in\AppBound{\phi}{\cur{D}}$ implies $[(\phi,\{x\})]\in\ABMixInf{\cur{E}}$
							\item $x\in\AppBound{\phi}{\cur{D}}$ and
							\begin{enumerate}
							  \item $x\in\SRem{\phi}{\cur{E}}$ implies $[(\phi,\{x\})]\in\ABReg{\cur{E}}$
							  \item $x\not\in\SRem{\phi}{\cur{E}}$ implies $[(\phi,\{x\})]\in\ABMixSing{\cur{E}}$.
							\end{enumerate}
						\end{enumerate}
					\item If $[(\phi,\{x\})]\in\ABPureInf{\cur{C}}$ then 
						\begin{enumerate}
							\item $x\not\in\AppBound{\phi}{\cur{D}}$ implies $[(\phi,\{x\})]\in\ABPureInf{\cur{E}}$
							\item $x\in\AppBound{\phi}{\cur{D}}$ and 
							\begin{enumerate}
							  \item $x\in\SRem{\phi}{\cur{E}}$ implies $[(\phi,\{x\})]\in\ABReg{\cur{E}}$
							  \item $x\not\in\SRem{\phi}{\cur{E}}$ implies $[(\phi,\{x\})]\in\ABPureSing{\cur{E}}$.
							\end{enumerate}
						\end{enumerate}
					\item If $[(\phi,\{x\})]\in\ABSing{\cur{C}}$ then $[(\phi,\{x\})]\in\ABSing{\cur{E}}$
					\item If $[(\phi,\{x\})]\in\ABMixSing{\cur{C}}$ then $[(\phi,\{x\})]\in\ABMixSing{\cur{E}}$
					\item If $[(\phi,\{x\})]\in\ABPureSing{\cur{C}}$ then $[(\phi,\{x\})]\in\ABPureSing{\cur{E}}$.
				\end{enumerate}
				Membership of $\SRem{\phi}{\cur{E}}$ can be checked using item \eqref{prop.unichanges.7}
				of Proposition \ref{prop.UniChanges}. The same statements hold when interchanging
				$\cur{C}$ and $\cur{D}$.
			\end{Thm}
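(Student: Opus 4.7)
The plan is to mirror the proof of Theorem \ref{thm.abchangessubset} exactly, only replacing the subset-case boundary theorem (Theorem \ref{boundaryPointChanges}) with its union-case counterpart (Theorem \ref{thm.union}). The key structural fact is that Definitions \ref{App&UnAppABpoints}, \ref{IndABPOints}, \ref{InfAbPoints} and \ref{SingAbPoints} each specify membership of $[(\phi,\{x\})]$ in an abstract boundary class purely in terms of membership of the representative $x$ in a corresponding boundary-point class for $\phi$. Consequently, fixing a representative $(\phi,\{x\})$ reduces the entire statement to tracking where the point $x$ lands in the 15-set classification under the replacement $\cur{C}\mapsto\cur{E}$.

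With that reduction in hand, items (1), (2), (7), (8), (9) are immediate: they are exactly the preservation statements from Theorem \ref{thm.union} for approachability, unapproachability, singularity, mixed singularity and pure singularity, composed with the defining equivalences from Definitions \ref{App&UnAppABpoints} and \ref{SingAbPoints}. Items (4)--(6) about points at infinity are obtained by splitting on $x\in\AppBound{\phi}{\cur{D}}$ versus $x\notin\AppBound{\phi}{\cur{D}}$ as in Theorem \ref{thm.union}: in the unbounded-approach branch, $x$ remains a point at infinity (of the same mixed/pure flavour), while in the other branch $x$ becomes a singularity, and a further split on $\SRem{\phi}{\cur{E}}$ decides whether the singularity is removable (hence indeterminate) or essential of the same flavour.

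The only delicate point, again identical to the one flagged in the proof of Theorem \ref{thm.abchangessubset}, is item (3) on indeterminate abstract boundary points. Here I would unpack Definition \ref{IndABPOints} into its three sub-cases $x\in\Reg{\phi}$, $x\in\RemInf{\phi}{\cur{C}}$, $x\in\RemSing{\phi}{\cur{C}}$, and apply the relevant items of Theorem \ref{thm.union} to each. The required uniformisation uses the observation $\Reg{\phi},\Inf{\phi}{\cur{C}}\subset\SRem{\phi}{\cur{C}}$ (regular points and points at infinity are always covered by themselves), so that an outcome of the form $x\in\Reg{\phi}$ or $x\in\RemInf{\phi}{\cur{E}}$ or $x\in\RemSing{\phi}{\cur{E}}$ is always packaged as $[(\phi,\{x\})]\in\ABReg{\cur{E}}$, while any residual essential-singularity behaviour is controlled by the $\SRem{\phi}{\cur{E}}$ test.

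The main obstacle is bookkeeping rather than mathematics: I must be careful in item (3) that the three sub-cases of indeterminacy for $\cur{C}$ each feed into the correct branch of Theorem \ref{thm.union} and collapse, via the inclusions above, into exactly the dichotomy stated (indeterminate for $\cur{E}$ versus singular of the appropriate type for $\cur{E}$); the $\SRem{\phi}{\cur{E}}$ criterion then pins down which side of that dichotomy applies, with Proposition \ref{prop.UniChanges}\eqref{prop.unichanges.7} providing the operational check. The symmetry clause is free, since $\cur{C}\cup_{\textnormal{b.p.p.}}\cur{D}=\cur{D}\cup_{\textnormal{b.p.p.}}\cur{C}$ whenever the union is defined, so swapping the roles of $\cur{C}$ and $\cur{D}$ throughout the argument reproduces the analogous list.
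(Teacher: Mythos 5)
Your proposal is correct and follows essentially the same route as the paper, whose proof of this theorem is simply that it follows from Definitions \ref{App&UnAppABpoints}, \ref{IndABPOints}, \ref{InfAbPoints} and \ref{SingAbPoints} together with Theorem \ref{thm.union}. Your additional care over item (3), using $\Reg{\phi},\Inf{\phi}{\cur{C}}\subset\SRem{\phi}{\cur{C}}$, is the same observation the paper records in the proof of Theorem \ref{thm.abchangessubset} and implicitly reuses here.
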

			\begin{proof}
				This follows from Definitions \ref{App&UnAppABpoints}, \ref{IndABPOints}, \ref{InfAbPoints} and \ref{SingAbPoints} and from Theorem
				\ref{thm.union}.
			\end{proof}

\section{Acknowledgements}
The author was partially funded by Marsden grant UOO-09-022.

\bibliographystyle{unsrt}
\bibliography{AB_classification_II_Ben_Whale}

\end{document}